\theoremstyle{thmstyleone}%
\newtheorem{theorem}{Theorem}%
\newtheorem{corollary}{Corollary}[theorem]
\newtheorem{lemma}{Lemma}
\newtheorem{result}{Result}
\theoremstyle{thmstyletwo}%
\theoremstyle{thmstylethree}%
\begin{document}

\title[Remote state preparation by multiple observers using a single copy of a two-qubit entangled state]{Remote state preparation by multiple observers using a single copy of a two-qubit entangled state}

%%=============================================================%%
%% Prefix	-> \pfx{Dr}
%% GivenName	-> \fnm{Joergen W.}
%% Particle	-> \spfx{van der} -> surname prefix
%% FamilyName	-> \sur{Ploeg}
%% Suffix	-> \sfx{IV}
%% NatureName	-> \tanm{Poet Laureate} -> Title after name
%% Degrees	-> \dgr{MSc, PhD}
%% \author*[1,2]{\pfx{Dr} \fnm{Joergen W.} \spfx{van der} \sur{Ploeg} \sfx{IV} \tanm{Poet Laureate} 
%%                 \dgr{MSc, PhD}}\email{iauthor@gmail.com}
%%=============================================================%%

\author*[1]{\fnm{Shounak} \sur{Datta}}\email{shounak.datta@bose.res.in; shounaknew@gmail.com}

\author[2,3,4,5]{\fnm{Shiladitya} \sur{Mal}}\email{shiladitya.27@gmail.com}
\equalcont{These authors contributed equally to this work.}

\author[2]{\fnm{Arun K.} \sur{Pati}}\email{akpati@hri.res.in}
\equalcont{These authors contributed equally to this work.}

\author[1]{\fnm{A. S.} \sur{Majumdar}}\email{archan@bose.res.in}
\equalcont{These authors contributed equally to this work.}

\affil*[1]{\orgname{S. N. Bose National Centre for Basic Sciences}, \orgaddress{\street{Block JD, Sector III, Salt Lake}, \city{Kolkata}, \postcode{700 106}, \country{India}}}

\affil[2]{\orgname{Harish-Chandra Research Institute, HBNI}, \orgaddress{\street{Chhatnag Road, Jhunsi}, \city{Allahabad}, \postcode{211 019}, \country{India}}}

\affil[3]{\orgdiv{Physics Division}, \orgname{National Center for Theoretical Sciences}, \orgaddress{\city{Taipei}, \postcode{10617}, \country{Taiwan}}}

\affil[4]{\orgdiv{Department of Physics and Center for Quantum Frontiers of Research and Technology (QFort)}, \orgname{National Cheng Kung University}, \orgaddress{\city{Tainan}, \postcode{701}, \country{Taiwan}}}

\affil[5]{\orgdiv{Centre for Quantum Science and Technology}, \orgname{Chennai Institute of Technology}, \orgaddress{\city{Chennai}, \postcode{600069}, \country{India}}}

%%==================================%%
%% sample for unstructured abstract %%
%%==================================%%

\abstract{We consider a scenario of remote state preparation (RSP) of qubits in the context of sequential network scenario. A single copy of an entangled state is shared between Alice on one side, and several Bobs on the other, who sequentially perform unsharp single-particle measurements in order to prepare a specific state. In the given scenario without any shared randomness between the various Bobs, we first determine the classical bound of fidelity for the preparation of remote states by the Bobs. We then show that there can be at most 6 number of Bobs who can sequentially and independently prepare the remote qubit state in Alice's lab with fidelity exceeding the classical bound in the presence of shared quantum correlations. The upper bound is achieved when the singlet state is initially shared between Alice and the first Bob and every Bob prepares a state chosen from the equatorial circle of the Bloch sphere. Then we introduce a new RSP protocol for non-equatorial ensemble of states. The maximum number of Bobs starts to decrease from six when either the choice of remote states is shifted from the equatorial circle towards the poles of the Bloch sphere, or when the  initial state shifts towards non-maximally entangled pure and mixed states.}

\keywords{Quantum communication, Remote state preparation, Recycling of quantum entanglement}

%%\pacs[JEL Classification]{D8, H51}

%%\pacs[MSC Classification]{35A01, 65L10, 65L12, 65L20, 65L70}

\maketitle

\section{Introduction}
A central issue in quantum communication and computation is to identify the tasks that can be performed with quantum resources. One of the most important tasks is to send quantum information or prepare some quantum states at remote locations. Teleportation~\cite{Bennett} is one such task where an unknown quantum state is transferred using a quantum channel without physically transporting the system. For teleporting a two-level system, or qubit, it is known that the protocol requires Bell-state measurement~\cite{BSM}, 1 ebit of entanglement and 2 cbits of classical communication from the sender to the receiver. It was shown later~\cite{Cerf} that if the requirement of entanglement is relaxed, then using shared local hidden variables, any projective measurement on an unknown state can be simulated at a remote location, dubbed as classical teleportation, which can be performed with the aid of $2.19$ cbits on average.

The advantage of the quantum realm was further demonstrated by the protocol of Remote State Preparation (RSP)~\cite{Pati,Bennett1}, a variant of teleportation, where the state to be prepared is known to the sender but oblivious to the receiver. Here, only 1 cbit is necessary along with the requirement of 1 ebit. The task of remote state preparation has been generalised in several ways in the literature~\cite{Oblivious,Optimal}, and realised in magnetic~\cite{NMR} as well as optical systems by means of using single mode photonic qubit~\cite{SMPQ}, polarised photons~\cite{Polar} via spontaneous parametric down conversion, and decoherent channels~\cite{DC}. RSP has found several applications in quantum information such as deterministic creation of single-photon states~\cite{Christ}, preparation of single-photon hybrid entanglement~\cite{Jiao}, initializing atomic quantum memory~\cite{Rosenfeld}, and preparing qubits in quantum nodes~\cite{Nguyen}. Similar to the spin systems, two-component Bose-Einstein condensates~\cite{Manish} and two degrees of freedom single photon beams~\cite{Wang,Cameron} can  be remotely prepared in different optical arrangements.  Being a quantum communication task like teleportation, RSP can be employed as a subroutine in a bigger network, {\it e.g.}, transferring information between various nodes of a quantum processor.

In remote state preparation, it is possible to have a trade-off between classical communication and shared entanglement to a extent~\cite{Bennett1}. Depending on the choice of the input ensemble from where the states are to be prepared, and how much information of the state is to be revealed to the receiver, there are different protocols of RSP~\cite{Oblivious, Optimal, DC, Jiao, Rosenfeld, Paris}. Instead of the most general encoding by the sender and the decoding operation by the receiver, various naturally restricted and experiment friendly protocols of RSP have been investigated~\cite{Optimal, Nguyen, Discord}. Considering such an encoding-decoding protocol and input ensemble taken as an equatorial circle on Bloch sphere, it has been shown that quantum correlation beyond entanglement (QCBE) can be helpful for RSP~\cite{Discord}. Non-vanishing geometric discord~\cite{GD}, a measure of QCBE, implies non-zero fidelity of RSP and more interestingly, there are some separable states which can outperform entangled state as well~\cite{Discord}. In an another work, simultaneous correlation between mutually unbiased bases has been shown to be responsible for a specific RSP protocol~\cite{Som}. It may be noted that if one considers the most general encoding-decoding operations,  separable states can't outperform entangled states in RSP for equatorial input states~\cite{Horodecki14}. In order to identify genuine quantum advantage in RSP, benchmark for classical resources has to be found for a specific protocol of RSP~\cite{Horodecki14}. The  non-classical nature of RSP is further revealed in terms of steerable features of dynamical processes~\cite{Chen}.

In the present work, we explore the performance of  remote state preparation  in the context of a sequential measurement scenario. We investigate the possibility of implementing RSP at Alice's lab by multiple distant Bobs who act sequentially and independently of each other.  Alice owns half of a single copy of an entangled state while the other half is shared sequentially by several Bobs. Each Bob individually wants to convey a message to Alice by preparing a particular state at Alice's lab up to some prefixed level of tolerance. Any pure state of a qubit is represented by a point on the Bloch sphere, which is characterised by two parameters, $\theta$ and $\phi$. Before starting the protocol, the value of $\theta$ is agreed in advance between Alice and all Bobs, i.e. the circle from which the state would be prepared. Based on the communication made by every Bob, Alice applies a suitable unitary to obtain the desired state. At any instant, a particular Bob remains ignorant about the encoding activities of the previous Bobs. Now,  the pertinent question is: how many Bobs can reliably convey the message to Alice within some error tolerance?

In an ideal scenario employing projective measurements, the sender can, in principle, deterministically prepare the desired remote qubit~\cite{Hayashi} at the receiver's end with the expense of complete breakdown of the entanglement. However, since we want RSP to be accomplished by multiple senders in the present case, here all the Bobs (except the last one) have to measure weakly so that some amount of quantum correlation is left to be shared by the subsequent Bobs and Alice ~\cite{Silva, Mal}. 

The novel framework of sequential sharing of resource states has been employed earlier in the context of Bell-nonlocality~\cite{Silva, Mal} and has been experimentally demonstrated too~\cite{Hu,Schiavon,Feng}. Studies on sequential sharing of quantum correlations have been extended  in several different directions, such as Bell-type nonlocality with multiple settings~\cite{Das,Cabello}, detection of entanglement~\cite{Bera,Mdgrm,Srivastava}, detecting bipartite~\cite{Sasmal, Brunner}, and tripartite steerability~\cite{Gupta}, steerability of local quantum coherence~\cite{Datta}, potential shareability of a teleportation channel~\cite{Roy}. Fruitfulness of the sequential scenario has been demonstrated by showing how unbounded randomness can be extracted from a two-qubit resource state~\cite{RNG}. Moreover, the resource theoretic advantage of utilizing a single copy of a two-qubit entangled state has been demonstrated in context of a sequential network as compared to using multiple entangled copies in a non-sequential scenario~\cite{Ddmhm}. In another interesting variant of the sequential protocol, it is shown that unbounded number of Bobs can exhibit Bell nonlocality with a single Alice~\cite{Colbeck}.

In our work we consider a specific RSP protocol which interestingly exhibits quantum advantage even for mixed separable shared states with quantum correlations~\cite{Discord}, when the states to be remotely prepared are chosen from an equatorial circle on the Bloch sphere. We further introduce a new RSP protocol with non-equatorial states as input ensemble, {\it i.e.}, states taken form a circles with varying polar angles,  and also in the context of the sequential scenario. In all these cases we first compute the classical limit of fidelity to prepare an unknown remote state without using any quantum resource, where 1 cbit of information is allowed to be transferred from the sender (Bob) to the receiver (Alice) through a classical channel. This is calculated by taking the average of input states over a particular circle on the Bloch sphere and turns out to be different from $\frac{2}{3}$ as found to be optimal in the standard scheme of quantum teleportation~\cite{Massar}. Interestingly, it is found that the maximum number of observers who can successfully realise the task of RSP with non-classical fidelity, depends upon the choice of the plane of the Bloch sphere. Considering a singlet state shared initially, we find that six Bobs become successful in this task if the state to be prepared is selected from the equator of the Bloch sphere. If the chosen remote state shifts from the equatorial circle towards the poles of the Bloch sphere, the maximum number of Bobs demonstrating the quantum advantage of RSP reduces, reaching zero at the two poles.  We further find that the upper bound on the number of senders (Bobs) becomes lower than six when the initial state is a non-maximally or mixed entangled state. 

We organize the paper as follows. In Sec.\ref{R2}, we provide first a brief overview of RSP and its optimal fidelity under a classical strategy. Next we present in Sec.\ref{R2.3} the definition and function of our RSP-protocol in the framework of multiple observers using a single copy of a two-qubit entangled state. In Sec.\ref{R3}, we perform a systematic investigation on the upper bound of observers for the manifestation of RSP. Concluding  remarks are presented in Sec.\ref{R4}.

\section{Backdrop} \label{R2}

\subsection{Brief introduction to Remote State Preparation}

To explain the quantum strategy for the remote state preparation in bipartite qubit scenario, we consider that two spin-$\frac{1}{2}$ particles are prepared in a singlet state, $|\psi^-\rangle = \frac{1}{\sqrt{2}} (|01\rangle -|10\rangle)$ where the first particle is possessed by the sender, let us call Bob and the second particle is possessed by the receiver, let us call Alice. Though $\lbrace |0\rangle, |1\rangle \rbrace$ forms the basis along z-direction, from the rotational invariance of singlet state under local unitary operation, it can also be represented as follows:
\begin{align}
|\psi^-\rangle = \frac{1}{\sqrt{2}} (|\psi^j\rangle ~|\psi^j_{\bot}\rangle -|\psi^j_{\bot}\rangle ~|\psi^j\rangle), 
\label{singlet}
\end{align}
where  $|\psi^j\rangle=\cos(\frac{\theta}{2})|0\rangle + \exp(i \phi_j) \sin(\frac{\theta}{2}) |1\rangle$ and $|\psi^j_{\bot}\rangle=-\sin(\frac{\theta}{2}) |0\rangle + \exp(i \phi_j) \cos(\frac{\theta}{2}) |1\rangle$, ($0\leq\theta\leq\pi, 0\leq\phi_j\leq 2\pi$) represent complementary pure states lying on the surface of a Bloch sphere. Now Bob selects either $|\psi^j\rangle$ or $|\psi^j_{\bot}\rangle$ which he intends to remotely prepare at Alice's side. If he performs projective measurement from the basis $\lbrace |\psi^j\rangle, |\psi^j_{\bot}\rangle \rbrace$ on his particle by using the knowledge of the qubit and gets the outcome $|\psi^j_{\bot}\rangle$, then he will communicate the output of his measurement to Alice through some classical channel. Hence, without transferring the particle physically, an unknown qubit will be prepared at Alice's end. Similarly, depending upon the outcome $|\psi^j\rangle$, $|\psi^j_{\bot}\rangle$ will be prepared at Alice's end. This occurs due to the consumption of 1 ebit and 1 cbit of classical communication(CC). Let us call the CC corresponding to $|\psi^j\rangle$ as 'up' and the CC corresponding to $|\psi^j_{\bot}\rangle$ as 'down'. This requirement, utilizing quantum advantage, can not be further reduced~\cite{Pati}. If Bob wants to prepare $|\psi^j\rangle$ at Alice's lab, half of the times he will be successful, though by applying $\sigma_z$ locally, Alice can regenerate $|\psi^j\rangle$ from $|\psi^j_{\bot}\rangle$ for the other half of the times when $|\psi^j\rangle$ is chosen only from the equatorial circle of the Bloch sphere (i.e. $\theta=\frac{\pi}{2}$). Physically, universal operation of complementarity on an unknown qubit is impossible to perform \cite{Buzek}. Therefore, for the states chosen from the circles other than the equator of the Bloch sphere (i.e. $\theta \neq \frac{\pi}{2}$), the process is 50\% successful~\cite{Bennett1}. 

The preparation of a remote state $|\psi^j\rangle$ at Alice's lab by using singlet state $|\psi^-\rangle$ shared between Alice and Bob depends on the representation of $|\psi^-\rangle$ in terms of two complementary pure states $|\psi^j\rangle$ and $|\psi^j_{\bot}\rangle$ as given by Eq.(\ref{singlet}). However, similar representations for other three Bell states exist~\cite{Pati}.  These are as follows,
\begin{align}
|\psi^+\rangle &= \frac{1}{\sqrt{2}} (|01\rangle + |10\rangle)
= -\frac{1}{\sqrt{2}} [\sigma_z |\psi^j_{\bot}\rangle ~|\psi^j\rangle - \sigma_z |\psi^j\rangle ~|\psi^j_{\bot}\rangle],\\
|\Phi^+\rangle &= \frac{1}{\sqrt{2}} (|00\rangle + |11\rangle)
= \frac{1}{\sqrt{2}} [i \sigma_y |\psi^j_{\bot}\rangle ~|\psi^j\rangle - i \sigma_y |\psi^j\rangle ~|\psi^j_{\bot}\rangle]
\end{align}
and
\begin{align}
|\Phi^-\rangle &= \frac{1}{\sqrt{2}} (|00\rangle - |11\rangle)
= \frac{1}{\sqrt{2}} [\sigma_x |\psi^j_{\bot}\rangle ~|\psi^j\rangle - \sigma_x |\psi^j\rangle ~|\psi^j_{\bot}\rangle]
\end{align}
where $\sigma_x, \sigma_y$ and $\sigma_z$ are Pauli matrices in x-, y- and z- basis respectively. Bob performs projective measurement in the basis $\lbrace |\psi^j\rangle, |\psi^j_{\bot}\rangle\rbrace$ and communicates the result to Alice. If Bob obtains $|\psi^j_{\bot}\rangle$, then Alice applies local unitaries $\sigma_z, i\sigma_y$ and $\sigma_x$ for shared states $|\psi^+\rangle, |\Phi^+\rangle$ and $|\Phi^-\rangle$ respectively, instead of $\mathbb{I}_2$ for singlet state $|\psi^-\rangle$ so that $|\psi^j\rangle$ is remotely prepared at Alice's lab. On the other hand, if Bob obtains $|\psi^j\rangle$, then Alice applies local unitaries $\mathbb{I}_2, i\sigma_z \sigma_y$ and $\sigma_z \sigma_x$ for shared states $|\psi^+\rangle, |\Phi^+\rangle$ and $|\Phi^-\rangle$ respectively, instead of $\sigma_z$ for singlet state $|\psi^-\rangle$ so that a state $|\psi^j\rangle$ from the equatorial circle of the Bloch sphere is remotely prepared at Alice's lab. Therefore Alice's operation is invariant under local unitary for deterministic preparation of an unknown remote state at Alice's side when distinct Bell states are shared initially.
%This technique of RSP is robust against any classical protocol.

If the state is prepared with certainty, then the measure of fidelity between the target state and prepared state will be 1. Consider for a particular bipartite state, the prepared state is $\rho^p$ and the pure state, $|\psi^d\rangle$ is desired to be prepared, then by definition, the generalised symmetric function of fidelity between these two quantum states can be expressed as~\cite{Popescu},
\begin{align*}
f(\rho^p,\rho^d)= f(\rho^d,\rho^p)= \Big(\operatorname{Tr}\Big[\sqrt{\sqrt{\rho^d}.\rho^p.\sqrt{\rho^d}}\Big]\Big)^2 = \langle \psi^d|\rho^p|\psi^d \rangle
\end{align*}
where $\rho^d=|\psi^d\rangle\langle\psi^d|$. The local unitary operation on Alice, i.e., from $\lbrace \mathbb{I}_2, \sigma_z \rbrace$, suffices to be considered for the equator of the Bloch sphere, because the rotational freedom of choosing the desired state by the sender may be accompanied by similar rotation on the prepared state by the receiver, i.e., $f(\rho^p,\rho^d) = f(U\rho^p U^{\dagger},U\rho^d U^{\dagger})$ for all unitary operators $U$. For a singlet state with projective measurements performed on Bob's particle, a remote state can be prepared at Alice's lab perfectly such that $f=1$. This is the condition for deterministic RSP~\cite{DRSP}. Since Alice has no information available to her about the state to be prepared, the average fidelity can be calculated by considering all input states from the circle on the Bloch sphere (taking single infinity of bits into account) as,
\begin{equation}
f_{av} = \frac{1}{2\pi} \int_0^{2\pi} f ~d\phi_i, ~~~~~(i=1,2,3,...).
\end{equation}

\subsection{Optimal fidelity for classical strategy}

Let us suppose that Bob wants to prepare an unknown pure qubit state $|\psi^d\rangle=\cos(\frac{\theta}{2})|0\rangle+e^{i\phi^d} \sin(\frac{\theta}{2})|1\rangle$ ($0\leq\theta\leq\pi, 0\leq\phi^d\leq 2\pi$) in Alice's lab without transferring the particle physically. $|\psi^d\rangle$ is known to Bob but unknown to Alice. $|\psi^d\rangle$ lies on the boundary of a circle (with $\theta$ fixed) which is perpendicular to the z-axis, specified a priori to both Alice and Bob. Here, Bob is not allowed to exploit any quantum resource that can be initially shared between Alice and Bob. We consider that Alice and Bob are completely uncorrelated and do not even share a separable state between them. Bob is only allowed to make use of classical channel by which he can send 1 cbit of information to Alice, and Alice is free to prepare a qubit upto local operations. If the prepared state in Alice's lab is $\rho^p$, then the closeness of the desired state and the prepared state is $\langle \psi^d|\rho^p|\psi^d \rangle$, which is $|\langle\psi^d|\psi^p\rangle|^2$ for pure state $\rho^p=|\psi^p\rangle\langle\psi^p|$. The classical fidelity can be calculated by averaging the closeness over infinitely many runs where in each run Bob is given different $|\psi^d\rangle$ from a certain plane of the Bloch sphere with fixed $\theta$ consistent with the protocol of RSP.

If the classical communication (CC) is not allowed from Bob to Alice, then Alice has to randomly guess the desired state which will either match or does not match with the desired state. Hence, the fidelity becomes $\frac{1}{2}$ which is the lower bound of classical fidelity in any circumstances. If CC is allowed from Bob to Alice, then we consider a specific classical strategy for preparing the remote states at Alice's side. Under a given scenario, the optimality of a classical strategy is in general very difficult to prove. However, it is more important to clearly and formally describe the scenario where we have defined the task. The scenario offers us the ground on which we can compare the two strategies with and without a quantum mechanical resource. In comparison with the quantum strategy discussed in the previous subsection, we restrict the scenario where Bob does qubit measurement on his particle and sends the outcome of the measurement via classical channel to Alice and Alice applies the unitary from $\lbrace \sigma_z,\mathbb{I}_2 \rbrace$ on her particle to recover the state. Note that any shared randomness between Alice and bob is not allowed in our classical model, which is demonstrated as follows.

Here Bob measures a dichotomic observable $\vec{n_B}.\vec{\sigma}$ where $\vec{n_B} = (\sin \theta_B \cos \phi_B, \sin \theta_B \sin \phi_B, \cos \theta_B)$ with $0\leq\theta_B\leq\pi, 0\leq\phi_B\leq 2\pi$ and sends the outcome (either up or down) to Alice by a classical channel. This classical strategy is consistent with the probabilistic preparation or prepare and measure scenario where Bob applies the best possible measurement to make the fidelity higher. It is not possible for Bob to share his knowledge about the state via CC to Alice by doing such measurement in arbitrary direction as Alice and Bob are perfectly uncorrelated. The probability of getting up(down) outcome is given by $\langle \psi^d|\frac{\mathbb{I}_2 \pm \vec{n_B}.\vec{\sigma}}{2}|\psi^d\rangle = \operatorname{Tr}[(\frac{\mathbb{I}_2 \pm \vec{n_B}.\vec{\sigma}}{2}).|\psi^d\rangle\langle\psi^d|]$. Now Alice can prepare either $|\psi^p_1\rangle$ or $|\psi^p_2\rangle$ depending upon the outcomes either up or down. Note that, Alice prepares the states by using the information of $\theta$, known to both Alice and Bob beforehand. However, the information of $\phi^d$ ($0\leq\phi^d\leq 2\pi$) remains completely unknown to Alice even after getting the CC from Bob. These states can be prepared as follows.

Alice chooses $|\psi^p_2\rangle = \cos(\frac{\theta}{2})|0\rangle+e^{i\phi^p_2} \sin(\frac{\theta}{2})|1\rangle$ ($0\leq\phi^p_2\leq 2\pi$) and applies either $\sigma_z$ and $\mathbb{I}_2$ locally depending upon the outcome either up and down respectively or vice-versa. Here we take $|\psi^p_1\rangle\langle\psi^p_1| = \sigma_z.|\psi^p_2\rangle\langle\psi^p_2|.\sigma_z$ because unitary evolution preserves the purity of $|\psi^p_1\rangle$. Alice fixes the polar angle ($\theta$) same as that of Bob using the information known beforehand and single infinity bits of information (i.e. $\phi^d$) remains unknown to her. As $|\psi^p_2\rangle$ is arbitrarily chosen from the given circle of the Bloch sphere, hence the direction of $|\psi^p_1\rangle$ is also arbitrary.

Now, as the input state is unknown to Alice, hence by taking average over all the input states $|\psi^d\rangle$ from the specified circle of the Bloch sphere (with $\theta$ fixed), the fidelity expression for a classical strategy becomes
\begin{align}
f_{cl} =& \frac{1}{2\pi} \int_{0}^{2\pi} \Big(\langle \psi^d|\frac{\mathbb{I}_2+\vec{n_B}.\vec{\sigma}}{2}|\psi^d\rangle ~\langle\psi^d|\psi^p_1\rangle\langle\psi^p_1|\psi^d\rangle \nonumber\\
&+ \langle \psi^d|\frac{\mathbb{I}_2-\vec{n_B}.\vec{\sigma}}{2}|\psi^d\rangle ~\langle\psi^d|\psi^p_2\rangle\langle\psi^p_2|\psi^d\rangle\Big) ~d\phi^d.
\end{align}
The classical fidelity can be optimized with respect to the measurement parameters chosen by Bob and the parameters of the states prepared by Alice. In this strategy, the sharing of the knowledge of Bloch co-ordinate system in the $|\psi^d\rangle$-plane (i.e. the plane with fixed polar angle $\theta$ where $|\psi^d\rangle$ lies) between Alice and Bob is redundant except the need to represent the desired and the prepared state in the same basis and in the same Bloch reference frame to compute the fidelity.

Corresponding to the given condition, it can be easily seen that,
\begin{align}
f_{cl} &= \frac{3}{4} + \frac{1}{4} [\cos 2\theta-\cos(\phi_B - \phi^p_2) \sin^3\theta ~\sin\theta_B] \nonumber\\
&\leq \frac{3}{4} + \frac{\cos 2\theta + \sin^3 \theta}{4}.
\label{fcl}
\end{align}
Here the inequality comes from the maximization over $\theta_B,\phi_B,\phi^p_2$ and the equality holds for $\theta_B= \frac{\pi}{2}, \phi_B - \phi^p_2 = \pm \pi$. Note that, when $|\psi^p_1\rangle$ and $|\psi^p_2\rangle$ are not related with each other by $\sigma_z$, then also the upper bound given by the inequality(\ref{fcl}) remains the same\footnote{If Alice chooses two pure states, e.g. $|\psi^p_1\rangle = \cos(\frac{\theta}{2})|0\rangle+e^{i\phi^p_1} \sin(\frac{\theta}{2})|1\rangle$ ($0\leq\phi^p_1\leq 2\pi$) and $|\psi^p_2\rangle = \cos(\frac{\theta}{2})|0\rangle+e^{i\phi^p_2} \sin(\frac{\theta}{2})|1\rangle$ ($0\leq\phi^p_2\leq 2\pi$) randomly from the specified circle with fixed polar angle, $\theta$ of the Bloch sphere depending upon the CC from Bob, then we have $f_{cl} = \frac{3}{4} + \frac{\cos 2\theta}{4} + \frac{\sin^3\theta ~\sin\theta_B}{8} [\cos(\phi_B - \phi^p_1)-\cos(\phi_B - \phi^p_2) ] \leq \frac{3}{4} + \frac{\cos 2\theta + \sin^3 \theta}{4}$ where the maximum occurs when $\theta_B= \frac{\pi}{2}, \phi_B - \phi^p_1 = 0 ~\text{or}~ 2\pi, \phi_B - \phi^p_2 = \pm \pi$.}.

Hence, the classical upper limit of fidelity for preparing an unknown state becomes $f_{cl}^{\max} = \frac{3}{4} + \frac{\cos 2\theta + \sin^3 \theta}{4}$ which goes beyond $\frac{1}{2}$ by utilizing 1 cbit of information. It can be easily checked that, this limit can not be improved by using a Positive Operator Valued Measurement(POVM) instead of a projective measurement at Bob's side. 

Unlike the standard scheme of teleportation~\cite{Massar}, the protocol of RSP achieves higher optimal limit of classical fidelity, i.e., $\frac{3}{4} \leq f_{cl}^{\max} \leq 1$ by considering all possible circles of latitude $\theta$ ($0\leq \theta \leq \pi$) on the Bloch sphere. For example, when the equatorial circle with $\theta= \frac{\pi}{2}$ is considered, then $f_{cl}^{\max} = \frac{3}{4}$, when the non-equatorial circle with $\theta= \frac{\pi}{4} ~\text{or}~ \frac{3\pi}{4}$ is considered, then $f_{cl}^{\max} = 0.838$, and when the two poles with $\theta= 0 ~\text{or}~ \pi$ is considered, then $f_{cl}^{\max} = 1$. The choice of states from the two poles of the Bloch sphere leaves no uncertainty in predicting the state classically because these correspond to two particular states whose polar angles are already pre-shared between Alice and Bob. 

$f_{cl}^{\max}$ under the framework of RSP for the choice of different circles of the Bloch sphere are plotted in Fig \ref{fidelity} where it is seen that $f_{cl}^{\max}(\theta)$ is an even function of $\theta$ w.r.t. $\theta=\frac{\pi}{2}$. $f_{cl}^{\max}$ is symmetrically distributed on either side of the equatorial circle of the Bloch sphere and gets amplified as the size of the non-equatorial circle contracts over the Bloch sphere. Any shared state without making use of quantum resource can achieve such fidelity. 

\begin{figure}[!ht]
\centering
\includegraphics[width=0.75\linewidth]{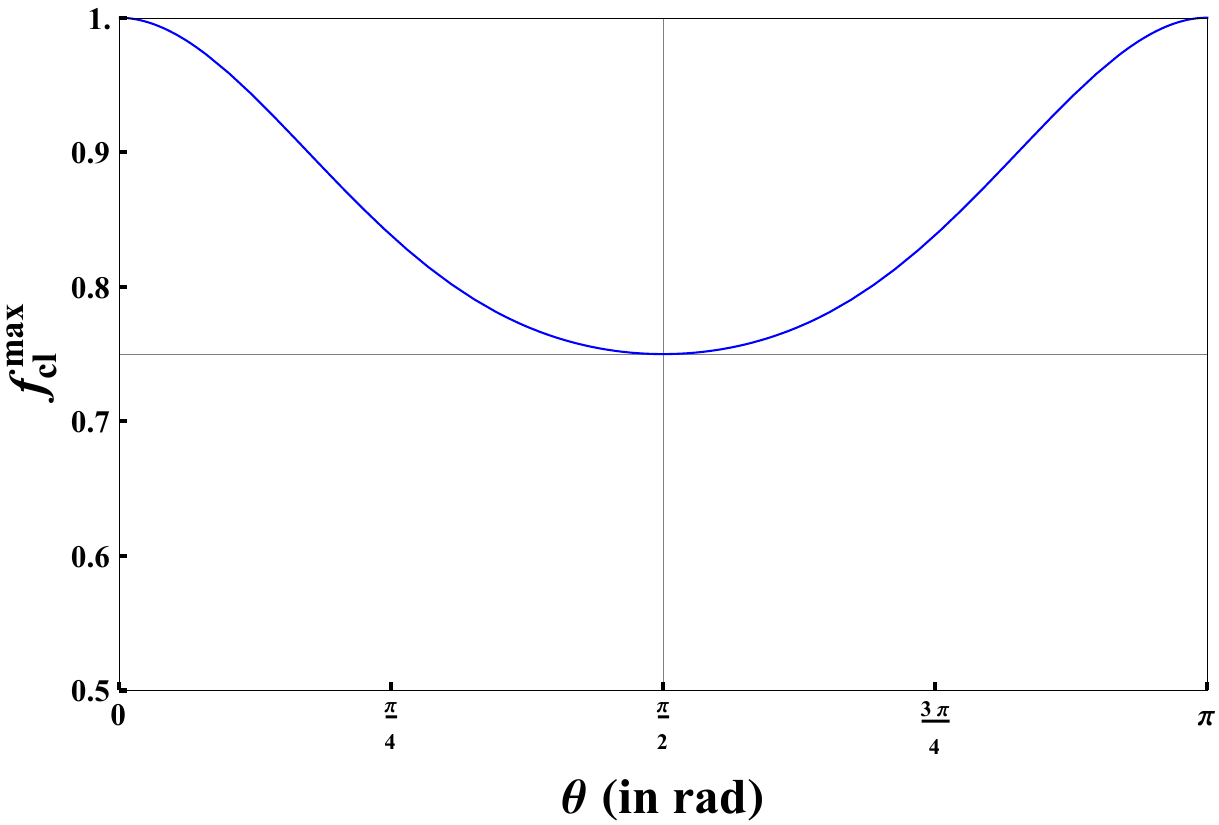}
\caption{\footnotesize $f_{cl}^{\max}$ is plotted against the polar angle $\theta$ of a fixed circle perpendicular to the z-axis of the Bloch sphere from where Bob chooses a state to be remotely prepared at Alice's side}
\label{fidelity}
\end{figure}

Our classical strategy is useful to distinguish the classical and quantum regimes under the given RSP-protocol where the measurement outcomes are sent via classical channel keeping the azimuthal angles of the qubits completely oblivious to the receiver. The violation of the upper bound of the classical fidelity implies a non-classical advantage in a particular way. However, one can construct other classical strategies in different contexts. For example, Horodecki et al proposed a classical model~\cite{Horodecki14} in a deterministic preparation scenario where the classical channel is utilized in a different manner by extracting the information of $\phi^d$ partially with the help of a Bloch reference frame. Despite the optimality of such strategy in terms of RSP from the equatorial great circle of the Bloch sphere, it is not obvious to exactly generalise the optimality for the non-equatorial circles of the Bloch sphere. In our work, we consider both the classical and quantum strategies consistent with the prepare and measure scenario where 1 cbit information arises from the outcomes of a qubit measurement which affects the system non-trivially. Our strategy works well for any circle of the Bloch sphere, induced by the sequential qubit measurements scenario, as discussed below.

%\section{Remote state preparation at Alice's lab by multiple Bobs} \label{R3}

\section{Definition and functionality of the scenario of RSP by multiple sequential observers} \label{R2.3}

Here we consider a scenario where half of a single-copy of a bipartite entangled pair is possessed by Alice at the receiver's end and another half is sequentially possessed by a number of independent Bobs (i.e. $\lbrace$Bob$^i\rbrace_i$) at the senders' end and the task of each Bob is to prepare remote states at Alice's side reliably with non-classical advantage by making use of unsharp measurements and by choosing the remote states from a specific circle perpendicular to the z-axis of the Bloch sphere. The polar angle of the remote states chosen by all the Bobs remains same throughout the protocol and is known to all the Bobs and Alice, while the azimuthal angle may be arbitrarily chosen by each Bob keeping it completely hidden to Alice. 

During the course of the protocol, Alice does not make measurements to decipher the state prepared by any of the Bobs. Alice is left to decide later which particular state (prepared by the corresponding Bobs) she wants to utilize. Such a scenario can be relevant in certain secret sharing protocols involving multiple parties, where Alice may not want to decide or reveal in advance her choice of the particular Bob whose supplied information she would like to utilize. Note that our protocol does not allow shared randomness among the sequence of Bobs, thereby ruling out any  possibility of sharing correlated multipartite mixed states among them. The optimal classical fidelity for such a protocol is thus determined accordingly. The upper bound on the fidelity obtained through our classical strategy is therefore, the same for all Bobs in the multiple observer scenario, and is given by the bound obtained in the single observer scenario as discussed in the previous subsection. 

Initially the state, $\rho^1$ is shared between Alice and Bob$^1$. If Bob$^1$ does a projective or sharp measurement on the basis known to him, then by sending the outcome of measurement to Alice, Bob can prepare the desired state at Alice's side with certainty. This leads to the complete loss of entanglement between Alice and Bob$^1$. However, unsharp measurement does not completely destroy the entanglement between them, and hence, the possibility of further utilization of the state remains by exploiting the remaining quantum resource, since the remaining entanglement guarantees the state to be discordant~\cite{Costa}. However, Bob$^1$ can not measure arbitrarily weakly as in this case, the fidelity of the prepared state at Alice's side would be very low due to a trade-off between sharpness and fidelity.

When the desired remote state is chosen from the equatorial circle of the Bloch sphere (i.e. $\theta=\frac{\pi}{2}$), the local unitary operation from $\lbrace \sigma_z,\mathbb{I}_2 \rbrace$ at Alice's side depending upon the CC from Bob$^i$ ($i=1,2,3,...$) can restore the desired state at Alice's side with finite average RSP-fidelity. There is no need to post-select Alice's subsystem to achieve this corresponding to a particular outcome of the unsharp measurement done by Bob$^i$. Therefore, the preparation of remote states from the equatorial great circle of the Bloch sphere corresponds to 100\% successful RSP-protocol.

If the desired state $|\psi^i\rangle$ is not chosen from the equatorial circle of the Bloch sphere (i.e. $\theta \neq \frac{\pi}{2}$, $0< \theta < \pi$),  the conversion $|\psi^i_{\bot}\rangle \rightarrow |\psi^i\rangle ~\forall i$ or vice-versa, is impossible for Alice to implement under any unitary operation without knowing $|\psi^i\rangle$. However, the measurement statistics of the desired state can still be reproduced at Alice's side by assuming the swap between the outcomes, i.e. (up $\rightarrow$ down) and (down $\rightarrow$ up), {\it a priori}. Here Alice allows the state without rotating it when she receives the down outcome via CC from Bob$^i$ and does not consider the CC for preparing the desired state (i.e. $|\psi^i\rangle$) corresponding to the up outcome communicated by Bob$^i$ as she is aware from the knowledge of the polar angle $\theta$, that she can not deterministically prepare $|\psi^i\rangle$ by applying $\sigma_z$. When Alice rejects the state  without altering her subsystem for the up outcome,  she takes our classical strategy by picking a random pure state from the fixed non-equatorial circle of the Bloch sphere which gives a maximum of $f_{cl}^{\max}(\theta \neq \frac{\pi}{2})$ as the fidelity of preparing an unknown state. On the other hand, Alice uses the quantum strategy to evaluate the RSP-fidelity corresponding to the down outcome. During preparation of $|\psi^i_{\bot}\rangle$ from a specified circle of the Bloch sphere, the action of Alice gets reversed depending upon the outcome of measurement up/down as communicated by a Bob. Moreover, Alice does not know at any instant whether she and Bob$^i$ share a quantum channel or not, and hence, she applies the above post-selection method uniformly for all the Bobs corresponding to RSP from non-equatorial plane of the Bloch sphere. The occurrence of selecting or rejecting the state by Alice is associated with probability $\frac{1}{2}$. Hence, this scenario leads to the RSP of 50\% success.

%%%%%%%%%%%%%%%%%%%%%%%%%%%%%%%%%%%%%%%%%%%%%%%%%%%%%%%%%%%%%%%%%%%%%%%
%%%%%%%%%%%%%%%%  Up to here  %%%%%%%%%%%%%%%%%%%%%%%%%
%%%%%%%%%%%%%%%%%%%%%%%%%%%%%%%%%%%%%%%%%%%%%%%%%%%%%%%%%%%%%%%%

We categorize RSP in the following two ways:
%\begin{widetext}
\begin{table}[h!]
\centering
 \begin{tabular}{| c | c |} 
 \hline
 Category-A & Category-B \\ 
 \hline
  Remote states chosen from the equatorial & Remote states chosen from a non-equatorial \\
  plane of the Bloch sphere bounded by the & plane of the Bloch sphere bounded by a \\ 
  great circle positioned perpendicular to the & circle positioned perpendicular to the \\
  z-axis where all the states have the same & z-axis where all the states have the same \\
  polar angle $\theta=\frac{\pi}{2}$ & polar angle $\theta \neq \frac{\pi}{2}$ \\
 \hline
 RSP is 100\% successful here because the & RSP is 50\% successful here because the \\
 conversion between the complementary states & conversion between the complementary states \\
 is possible via unitary (i.e. $\sigma_z$) & is impossible via unitary (e.g. $\sigma_z$) \\
 \hline
 Post-selection need not  be implemented here & Post-selection  needs to be implemented here \\
 \hline
\end{tabular}
%\caption{}
\end{table}
%\end{widetext}

Next, Alice reverses her previous unitary operation to reinstate the shared state to be used by the next Bob for the same task. She applies $\sigma_z$ if her previous operation was $\sigma_z$ (in case of remote states from the equatorial circle of the Bloch sphere), and does nothing if it was $\mathbb{I}_2$. The reverse operation is not performed for the remote states chosen from a non-equatorial circle of the Bloch sphere. 

In a nutshell, Bob$^1$ starts the process and passes the particle in his possession to Bob$^2$ such that Bob$^2$ can prepare the remote state at Alice's lab after the suitable reverse operation done by Alice. Since Bob$^2$ is independent of Bob$^1$, he considers the average effect of all probable actions taken by Bob$^1$. During this second step, Bob$^2$ performs an unsharp measurement on the resultant state, and makes a CC to Alice.  Alice again applies a suitable unitary correction to determine the average RSP-fidelity depending upon the outcome of measurement done by Bob$^2$ and the chosen plane of the Bloch sphere with known $\theta$. Again, the suitable reverse operation, (i.e., $\lbrace \sigma_z,\mathbb{I}_2 \rbrace$) for $\theta=\frac{\pi}{2}$ is done by Alice to leave the state to be re-utilised by Bob$^3$. The method of post-selection is  applied when $\theta \neq \frac{\pi}{2}$.  

The above process is repeated for all the following Bobs. In this way, all independent $n$-number of Bobs  can sequentially access the particle  and can prepare remote states at Alice's end by making use of a single copy of the initial entangled state. Fig.\ref{scheme} depicts the entire protocol in the aforementioned scenario. The process continues as long as Bobs are able to perform the task of RSP with average non-classical fidelity which attains maximum value if only the last Bob in the sequence performs sharp measurement. Our aim is to find the bound on the maximum number of Bobs in the framework of the above RSP  scenario.

\begin{figure}[!ht]
\centering
\includegraphics[width=0.75\linewidth]{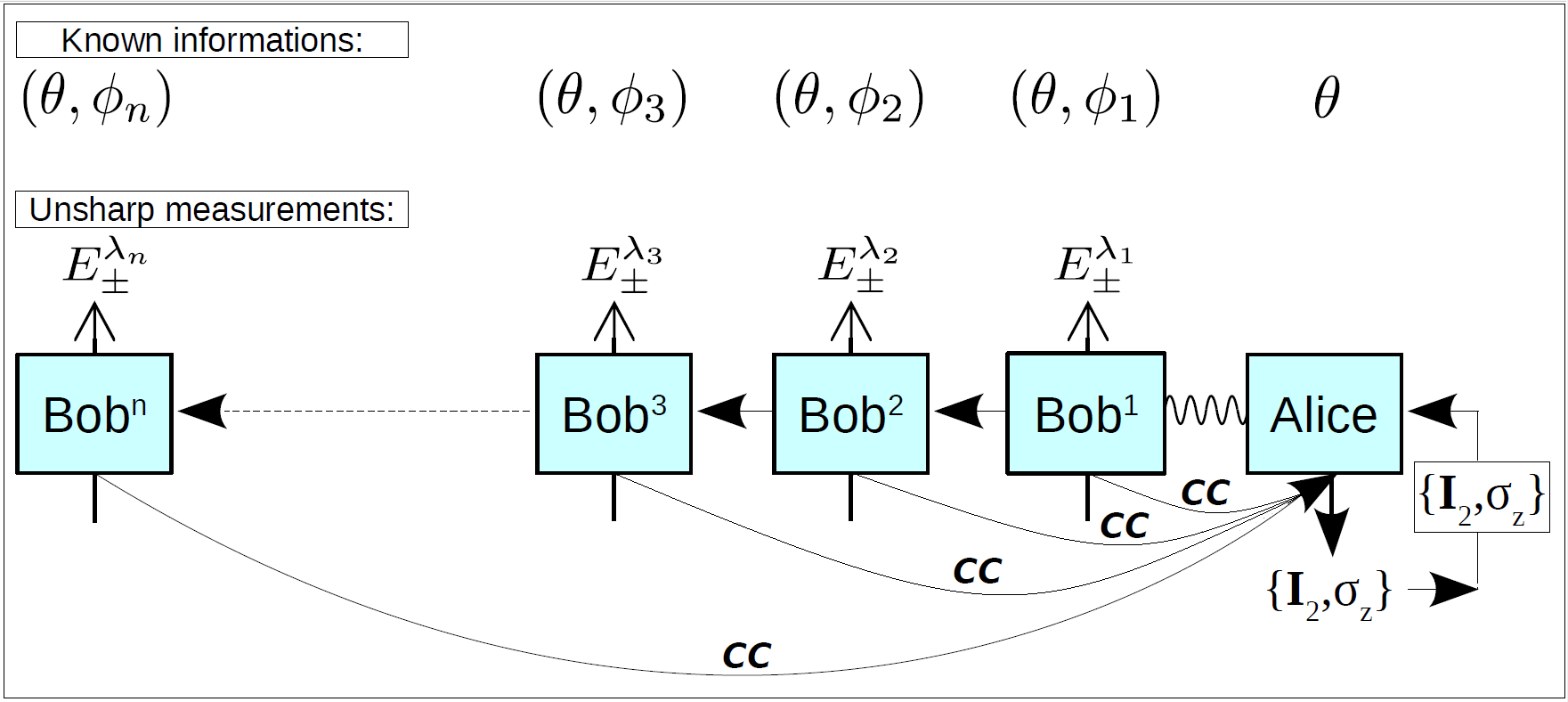}
\caption{\footnotesize Schematic diagram of remote state preparation at Alice's lab by multiple Bobs which are independent of each other. The operations, $\lbrace \mathbb{I}_2, \sigma_z \rbrace$ are applicable for remote states from the equatorial circle of the Bloch sphere.}
\label{scheme}
\end{figure}

%{\color{blue}The formalism of unsharp measurement does not exactly match with the 'weak measurement' technique. In our formalism, the system is not projected to one of the pure states $\lbrace |0\rangle, |1\rangle \rbrace$, rather into the states mixed with white noise with finite probability $\lambda_i$.}

%\begin{itemize}
%\item 

In general, an unsharp measurement which is a class of a POVM is characterised by the effect operators~\cite{Busch}, $E_{\pm}^{\lambda_i} = \lambda_i P^i_{\pm} + (1-\lambda_i) \frac{\mathbb{I}_2}{2}$, ~($0 \leq \lambda_i \leq 1; \sum_{a=+,-} E_a^{\lambda_i} = \mathbb{I}_2$) corresponding to outcomes $\lbrace +,- \rbrace$ and $\lambda_i$ corresponds to the sharpness of measurement performed by Bob$^i$. The choice of measurements corresponding to projectors, $P^i_{+} = |\psi^i\rangle\langle\psi^i|$ and $P^i_{-} = |\psi_{\bot}^i\rangle\langle\psi_{\bot}^i|$ are distinct for each Bob. Depending on the outcome $a$, i.e., $\lbrace$up(+) or down(-)$\rbrace$, the normalised conditional state at Alice's side becomes,
\begin{equation}
\rho^i_{A|E_a^{\lambda_i}} = \frac{\operatorname{Tr}_{\text{Bob}^i}\Big[\Big(\mathbb{I}_2 \otimes \sqrt{E_a^{\lambda_i}}\Big) ~\rho^i ~\Big(\mathbb{I}_2 \otimes \sqrt{E_a^{\lambda_i}}\Big)\Big]}{\operatorname{Tr}\Big[\Big(\mathbb{I}_2 \otimes \sqrt{E_a^{\lambda_i}}\Big) ~\rho^i ~\Big(\mathbb{I}_2 \otimes \sqrt{E_a^{\lambda_i}}\Big)\Big]}, ~(i=1,2,3,...).
\label{cond}
\end{equation}
The probability of getting such state is given by, $p^i_a=\operatorname{Tr}[(\mathbb{I}_2 \otimes \sqrt{E_a^{\lambda_i}}) ~\rho^i ~(\mathbb{I}_2 \otimes \sqrt{E_a^{\lambda_i}})]$, where $a\in\lbrace +,- \rbrace$. 

%\item 

Bob$^i$ communicates the outcome of his measurement to Alice through some classical channel and Alice applies suitable unitary operation on her conditional state. Hence the average RSP-fidelity of Alice's state prepared by communicating either of the outcomes $\lbrace +,- \rbrace$ from Bob$^i$, and by considering all the input states from the circle of Bloch sphere chosen by Bob$^i$, is given by
\begin{align}
f_{av^{+}}^{AB^i} &= \frac{1}{2\pi} \int_0^{2\pi} \langle \psi^i_{\bot} | \rho^i_{A|E_{+}^{\lambda_i}} | \psi^i_{\bot} \rangle ~d\phi_i, \nonumber\\
f_{av^{-}}^{AB^i} &= \frac{1}{2\pi} \int_0^{2\pi} \langle \psi^i | \rho^i_{A|E_{-}^{\lambda_i}} | \psi^i \rangle ~d\phi_i, ~~~~~(i=1,2,3,...)
\label{afid}
\end{align}
as $\phi_i$ ($0\leq \phi_i \leq 2\pi$) remains completely unknown to Alice. The first and second equality corresponds to the cases where each Bob desires to prepare $| \psi^i_{\bot} \rangle$ or $| \psi^i \rangle$ from the chosen circle of the Bloch sphere, respectively. Whereas depending upon Alice's course of action as described before, the average RSP-fidelity by considering both the outcomes of unsharp measurement done by Bob$^i$ takes the form,
\begin{align}
f_{av}^{AB^i} =& \frac{1}{2\pi} \int_0^{2\pi} (p_{+}^i ~\langle \psi^i | \sigma_z. \rho^i_{A|E_+^{\lambda_i}}. \sigma_z | \psi^i \rangle \nonumber\\
&+ p_{-}^i ~\langle \psi^i | \mathbb{I}_2. \rho^i_{A|E_-^{\lambda_i}}. \mathbb{I}_2 | \psi^i \rangle) ~d\phi_i
\label{psi100}
\end{align}
in case of the preparation of $|\psi^i\rangle$ from the equatorial circle of the Bloch sphere with 100\% successful RSP or,
\begin{align}
f_{av}^{AB^i} =& \frac{1}{2\pi} \int_0^{2\pi} (p_{+}^i ~f_{cl}^{\max} + p_{-}^i ~\langle \psi^i | \rho^i_{A|E_{-}^{\lambda_i}} | \psi^i \rangle) ~d\phi_i
\label{psi50}
\end{align}
in case of the preparation of $|\psi^i\rangle$ from a non-equatorial circle of the Bloch sphere with 50\% successful RSP or,
\begin{align}
f_{av}^{AB^i} =& \frac{1}{2\pi} \int_0^{2\pi} (p_{+}^i ~\langle \psi^i_{\bot} | \mathbb{I}_2. \rho^i_{A|E_+^{\lambda_i}}. \mathbb{I}_2 | \psi^i_{\bot} \rangle \nonumber\\
&+ p_{-}^i ~\langle \psi^i_{\bot} | \sigma_z. \rho^i_{A|E_-^{\lambda_i}}. \sigma_z | \psi^i_{\bot} \rangle) ~d\phi_i
\end{align}
\label{psi_perp100}
in case of the preparation of $|\psi^i_{\bot}\rangle$ from the equatorial circle of the Bloch sphere with 100\% successful RSP or,
\begin{align}
f_{av}^{AB^i} = \frac{1}{2\pi} \int_0^{2\pi} (p_{+}^i ~\langle \psi^i_{\bot} | \rho^i_{A|E_{+}^{\lambda_i}} | \psi^i_{\bot} \rangle + p_{-}^i ~f_{cl}^{\max}) ~d\phi_i
\label{psi_perp50}
\end{align}
in case of the preparation of $|\psi^i_{\bot}\rangle$ from a non-equatorial circle of the Bloch sphere with 50\% successful RSP. Here, $p_a^i$ ($a\in\lbrace +,- \rbrace$) denotes the probability of getting the outcome $a$.

%\item 

After Bob$^i$ sends his particle to Bob$^{i+1}$ to perform the task of RSP,  Bob$^{i+1}$ being completely uninformed about the choice made by Bob$^i$ before his measurement, considers the state shared with Alice as an average over all possible input states chosen by Bob$^i$ and all possible measurement outcomes. As Alice's operation is reversible, hence according to L\"{u}der transformation (non-selective), the shared state between Alice and Bob$^{i+1}$ becomes,
\begin{align}
\rho^{i+1} = \frac{1}{2\pi} \int_0^{2\pi} \sum_{a=+,-} \Big(\mathbb{I}_2 \otimes \sqrt{E_{a}^{\lambda_i}}\Big) ~\rho^i ~\Big(\mathbb{I}_2 \otimes \sqrt{E_{a}^{\lambda_i}}\Big) ~d\phi_i,& \nonumber\\
(i=1,2,3,...).&
\label{gs}
\end{align}
This is also true for $\theta \neq \frac{\pi}{2}$ due to the independence of the Bobs and the ignorance of Bob$^{i+1}$ about the direction and outcome of the measurements done by the previous Bobs. Using the shared state $\rho^{i+1}$, Bob$^{i+1}$ completes his task, resulting in the conditional state given by Eq.(\ref{cond}) with $i \rightarrow i+1$.

%\end{itemize}

The above steps are recursive. For the n-th Bob, $f_{av}^{AB^n}$ (n=1,2,3,...) happens to be a function of all $\lambda_i$ (i=1,..,n). The entire process stops when the average fidelity goes below the classical limit of fidelity for all possible range of $\lambda_i$s compatible for successful remote state preparation.

\section{Upper bound on the number of Bobs} \label{R3}

%\subsection{Upper bound on the maximum number of Bobs}

Now, we are in a position to explore the bound on the number of Bobs who can sequentially prepare remote states, picked from the different circles of a Bloch sphere, at Alice's side with non-classical advantage compared to our classical strategy. Note that such a bound depends on the framework where the task is defined. We assume the initial state between Alice and Bob$^1$ to be one of the maximally entangled states, i.e., the singlet state, since the resource in terms of quantum geometric discord is maximum (i.e., 1) for such a state~\cite{Discord,Luo}.  Bob$^1$ chooses to perform $\lbrace E_{\pm}^{\lambda_1} \rbrace$, and by communicating the output classically, he wants to prepare states from $\lbrace |\psi^1\rangle, |\psi^1_{\bot}\rangle \rbrace$ to Alice.

\begin{lemma}
The fidelity of preparing any pure state to Alice subject to unsharp measurement by Bob$^1$ with sharpness $\lambda_1$ while sharing a Bell state is $\frac{1+\lambda_1}{2}$.
\end{lemma}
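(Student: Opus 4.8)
The strategy is to compute the post-measurement conditional state at Alice's side directly from Eq.~(\ref{cond}) and then evaluate the fidelity integral in Eq.~(\ref{afid}). Since the shared state is the singlet $\rho^1=|\psi^-\rangle\langle\psi^-|$, I would first exploit the representation in Eq.~(\ref{singlet}), writing $|\psi^-\rangle = \frac{1}{\sqrt{2}}(|\psi^1\rangle|\psi^1_{\bot}\rangle - |\psi^1_{\bot}\rangle|\psi^1\rangle)$ in Bob$^1$'s measurement basis $\{|\psi^1\rangle,|\psi^1_{\bot}\rangle\}$. In this basis the effect operators take the transparent form $E_{\pm}^{\lambda_1} = \lambda_1 P^1_{\pm} + (1-\lambda_1)\frac{\openone_2}{2}$, so $\sqrt{E_{\pm}^{\lambda_1}}$ is diagonal with eigenvalues $\sqrt{(1\pm\lambda_1)/2}$ and $\sqrt{(1\mp\lambda_1)/2}$ on the states $|\psi^1\rangle,|\psi^1_{\bot}\rangle$.

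The key step is then to apply $(\openone_2\otimes\sqrt{E_a^{\lambda_1}})$ to $|\psi^-\rangle$ and trace out Bob$^1$'s subsystem. Because of the perfect anti-correlation of the singlet in this basis, the outcome $a=-$ collapses Alice's conditional state toward $|\psi^1\rangle$ and $a=+$ toward $|\psi^1_{\bot}\rangle$, and I would show that the normalised conditional state is
\begin{equation}
\rho^1_{A|E_{-}^{\lambda_1}} = \frac{(1+\lambda_1)}{2}\,|\psi^1\rangle\langle\psi^1| + \frac{(1-\lambda_1)}{2}\,|\psi^1_{\bot}\rangle\langle\psi^1_{\bot}|,
\end{equation}
with the normalisation factor $p^1_{-}=\tfrac12$ cancelling cleanly. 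Substituting this into the fidelity $\langle\psi^1|\rho^1_{A|E_{-}^{\lambda_1}}|\psi^1\rangle$ immediately yields $\frac{1+\lambda_1}{2}$, and the integrand is independent of $\phi_1$, so the angular average in Eq.~(\ref{afid}) is trivial and the two equalities there agree by the symmetry between the $\pm$ outcomes.

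The main subtlety to handle carefully is the claim that the result holds for \emph{any} pure state, i.e.\ independently of the polar angle $\theta$ and independently of whether one prepares $|\psi^1\rangle$ or $|\psi^1_{\bot}\rangle$. This follows from the rotational invariance of the singlet asserted in Eq.~(\ref{singlet}): since $|\psi^-\rangle$ has the same anti-correlated form in \emph{every} complementary basis $\{|\psi^1\rangle,|\psi^1_{\bot}\rangle\}$, the computation above is basis-independent, so the fidelity $\frac{1+\lambda_1}{2}$ carries no $\theta$ or $\phi_1$ dependence. I expect the only real work to be bookkeeping of the $\sqrt{E_a^{\lambda_1}}$ factors; the unbiased, equatorially-symmetric structure of the effect operators guarantees that all cross terms and phase-dependent contributions vanish upon tracing out Bob$^1$, leaving the stated linear dependence on $\lambda_1$.
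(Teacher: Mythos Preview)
Your proposal is correct and follows essentially the same route as the paper: compute Alice's conditional state from Eq.~(\ref{cond}) using the singlet's rotational invariance (Eq.~(\ref{singlet})), then read off the overlap with the target state. Your expression $\rho^1_{A|E_{-}^{\lambda_1}} = \tfrac{1+\lambda_1}{2}|\psi^1\rangle\langle\psi^1| + \tfrac{1-\lambda_1}{2}|\psi^1_{\bot}\rangle\langle\psi^1_{\bot}|$ is algebraically identical to the paper's form $\lambda_1|\psi^1\rangle\langle\psi^1| + \tfrac{1-\lambda_1}{2}\openone_2$, and both yield the fidelity $\tfrac{1+\lambda_1}{2}$ independently of $\theta,\phi_1$ for exactly the reason you state.
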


\begin{proof}
We consider the initial state $\rho^1=|\psi^-\rangle\langle\psi^-|$, which is shared between Alice and Bob$^1$. Depending upon the outcome $\lbrace +,- \rbrace$ obtained by Bob$^1$, the state produced at Alice becomes either
\begin{equation}
\rho^1_{A|E_+^{\lambda_1}} = \lambda_1 |\psi^1_{\bot}\rangle\langle\psi^1_{\bot}| + \frac{1-\lambda_1}{2} \mathbb{I}_2, ~~~\forall \theta,\phi_1
\end{equation}
or
\begin{equation}
\rho^1_{A|E_-^{\lambda_1}} = \lambda_1 |\psi^1\rangle\langle\psi^1| + \frac{1-\lambda_1}{2} \mathbb{I}_2, ~~~\forall \theta,\phi_1.
\end{equation}

Therefore, it follows that, $f^{AB^1} = \langle \psi^1_{\bot} | \rho^1_{A|E_{+}^{\lambda_1}} | \psi^1_{\bot} \rangle = \langle \psi^1 | \rho^1_{A|E_{-}^{\lambda_1}} | \psi^1 \rangle = \frac{1+\lambda_1}{2}, ~\forall \theta,\phi_1$.
\end{proof}

Corresponding to the completely successful preparation of remote states $\lbrace |\psi^1\rangle \rbrace_{\phi_1}$ from the equatorial circle of the Bloch sphere (i.e., with $\theta=\frac{\pi}{2}$), the average RSP-fidelity, $f_{av}^{AB^1} = \frac{1}{2\pi} \int_0^{2\pi} (p_{+}^1 ~\langle \psi^1 | \sigma_z.\rho^1_{A|E_{+}^{\lambda_1}}.\sigma_z | \psi^1 \rangle + p_{-}^1 ~\langle \psi^1 | \rho^1_{A|E_{-}^{\lambda_1}} | \psi^1 \rangle) ~d\phi_1 = \frac{1+\lambda_1}{2} > \frac{3}{4}$ occurs when $\lambda_1 > \frac{1}{2}$ in comparison with our classical strategy. Here, the probability of getting $\rho^1_{A|E_{\pm}^{\lambda_1}}$, i.e. $p_{\pm}^1$ satisfies $p_{+}^1=p_{-}^1=\frac{1}{2} ~\forall \phi_1,\lambda_1$. The result is unaltered for the choice of remote states $\lbrace |\psi^1_{\bot}\rangle \rbrace_{\theta=\frac{\pi}{2},\phi_1}$. 

On the other hand, for 50\% successful preparation of remote states $\lbrace |\psi^1\rangle \rbrace_{\phi_1}$ from any non-equatorial circle of the Bloch sphere (i.e. with $\theta \neq \frac{\pi}{2}$), the average RSP-fidelity, by considering both the outcomes, becomes $f_{av}^{AB^1} = \frac{1}{2\pi} \int_0^{2\pi} (p_{+}^1 ~f_{cl}^{\max} + p_{-}^1 ~\langle \psi^1 | \rho^1_{A|E_{-}^{\lambda_1}} | \psi^1 \rangle) ~d\phi_1 = \frac{f_{cl}^{\max}}{2} + \frac{1+\lambda_1}{4}$ where $p_{+}^1=p_{-}^1=\frac{1}{2} ~\forall \theta,\phi_1,\lambda_1$. This is because Alice post-selects the state corresponding to the down outcome at Bob$^1$'s side and discards the state without considering it for the task of RSP corresponding to the up outcome at Bob$^1$'s side and the fidelity can achieve the classical upper bound $f_{cl}^{\max}=\frac{3}{4} + \frac{\cos 2\theta + \sin^3 \theta}{4}$ when Alice discards the state. Therefore $f_{av}^{AB^1} = \frac{f_{cl}^{\max}}{2} + \frac{1+\lambda_1}{4} > f_{cl}^{\max}$ occurs when $\lambda_1 > \frac{1+\cos 2\theta + \sin^3 \theta}{2}$ compared to our classical strategy (see Fig.\ref{rsps}). By applying similar arguments for the remote states $\lbrace |\psi^1_{\bot}\rangle \rbrace_{\theta \neq \frac{\pi}{2},\phi_1}$, the result remains unchanged. Hence for this region of $\lambda_1$, the joint state remains useful for further utilization in the task of RSP.

After suitable reversible operation done by Alice, Bob$^1$ can now send his particle to Bob$^2$ for the next round of the protocol. We consider that Bob$^i$ prepares either $|\psi^i \rangle$ or $|\psi^i_{\bot}\rangle$ in the $i$-th round of the procedure. Let us suppose that the remote state to be prepared at Alice's side in any round of the procedure, comes from a circle with fixed $\theta$ on the Bloch sphere i.e., a single infinity bits of information ($\theta$) is known to all the senders and the receiver. This implies that once a particular $\theta$ is fixed at the beginning of the procedure, it remains the same throughout the procedure. While the azimuthal angle of the state remains secret to the receiver, the sender in each iteration completely knows the state to be prepared. Here we keep $\theta$ to be arbitrary, i.e. the remote state may be from any circle on the surface of the Bloch sphere.

\begin{figure}[!ht]
\centering
\includegraphics[width=0.75\linewidth]{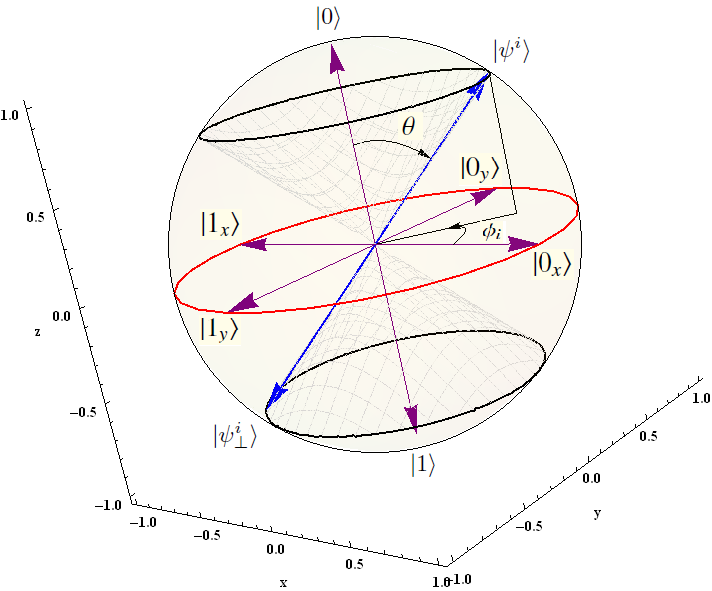}
\caption{\footnotesize (Color Online) Schematic diagram of a Bloch sphere. $\lbrace |0\rangle,|1\rangle\rbrace$ forms the computational basis, whereas $\lbrace |0_{x(y)}\rangle,|1_{x(y)}\rangle\rbrace$ are the basis corresponding to x(y) direction. $\lbrace|\psi^i\rangle,|\psi^i_{\bot}\rangle\rbrace$ are two mutually orthogonal states with a polar angle $\theta$. Traversing over all azimuthal angles $\phi_i \in [0,2\pi]$ implies, in effect, the precession over the curved surface of the double cone.}
\label{Bloch}
\end{figure}

As shown in Fig.\ref{Bloch}, for a particular azimuthal angle $\phi_i$, the remote states $|\psi^i\rangle$ and $|\psi^i_{\bot}\rangle$ are diametrically opposite to each other, i.e., they lie on a particular equatorial plane. But when we fix the polar angle, $\theta_{\neq \frac{\pi}{2}}$ and evaluate the average over all $\phi_i$s,  the vector $|\psi^i\rangle$ or its complement precess over the curved surface of the right-circular cone thus formed with its apex pointing at the center of the Bloch sphere. Choosing the remote states for a fixed $\theta$ ($0\leq \theta \leq \pi$) indicates the choice of a circle from the Bloch sphere. For all Bobs, $\theta$ is fixed a priori for a given circle of the Bloch sphere and it is known to all, including Alice through a public channel. As per the formalism, Alice has no access to $\phi_i$s, i.e., single infinity bits of information about the state to be prepared, remains unknown to her.

%From now onwards, we discuss the task of sharing by several Bobs in two different ways based on their choice of circles from the Bloch sphere.

\begin{lemma}
For every step of remote qubit preparation at Alice's side, the average state remains discordant for the subsequent Bob.
\end{lemma}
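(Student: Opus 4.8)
The plan is to reduce the statement to a purely geometric fact about the Bloch representation of the state passed on by Eq.(\ref{gs}), by tracking how the non-selective map deforms the two-qubit correlation data. First I would diagonalise the effect operators: since $E_{\pm}^{\lambda_i}=\lambda_i P^i_{\pm}+(1-\lambda_i)\frac{\openone_2}{2}$ share the eigenbasis $\lbrace|\psi^i\rangle,|\psi^i_{\bot}\rangle\rbrace$, the square roots are $\sqrt{E_{+}^{\lambda_i}}=\sqrt{\tfrac{1+\lambda_i}{2}}\,|\psi^i\rangle\langle\psi^i|+\sqrt{\tfrac{1-\lambda_i}{2}}\,|\psi^i_{\bot}\rangle\langle\psi^i_{\bot}|$, and similarly for $\sqrt{E_{-}^{\lambda_i}}$ with the two coefficients interchanged. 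A direct computation then shows that the Kraus sum $\sum_{a}\sqrt{E_a^{\lambda_i}}(\cdot)\sqrt{E_a^{\lambda_i}}$ acts on Bob's qubit as a partial dephasing in the basis $\lbrace|\psi^i\rangle,|\psi^i_{\bot}\rangle\rbrace$: it leaves the populations untouched and multiplies the coherences by $2\sqrt{\tfrac{1+\lambda_i}{2}}\sqrt{\tfrac{1-\lambda_i}{2}}=\sqrt{1-\lambda_i^2}$. Equivalently, on the Bloch vector it preserves the component along the equatorial axis $\hat n(\phi_i)=(\cos\phi_i,\sin\phi_i,0)$ and scales the two transverse components by $\eta_i\equiv\sqrt{1-\lambda_i^2}$.

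Next I would carry out the $\phi_i$-average demanded by Eq.(\ref{gs}). Writing the single-axis map as $M_{\hat n}=\eta_i\openone+(1-\eta_i)\hat n\hat n^{T}$ and using $\frac{1}{2\pi}\int_0^{2\pi}\hat n(\phi_i)\hat n(\phi_i)^{T}d\phi_i=\mathrm{diag}(\tfrac12,\tfrac12,0)$, the averaged local channel becomes the unital Bloch-scaling
\begin{equation*}
\langle M\rangle_i=\mathrm{diag}\Big(\tfrac{1+\eta_i}{2},\tfrac{1+\eta_i}{2},\eta_i\Big).
\end{equation*}
Because $\vec a=\vec b=0$ and the correlation matrix of the singlet is $T^{(1)}=-\openone_3$, and because $\openone_2\otimes\langle M\rangle_i$ sends $\vec b\mapsto\langle M\rangle_i\vec b$, $T\mapsto T\langle M\rangle_i$ while leaving $\vec a$ fixed, an induction on $i$ shows that every $\rho^i$ stays Bell-diagonal with $\vec a=\vec b=0$ and
\begin{equation*}
T^{(i)}=-\,\mathrm{diag}\big(c_i,c_i,d_i\big),\qquad c_i=\prod_{k<i}\tfrac{1+\eta_k}{2},\quad d_i=\prod_{k<i}\eta_k .
\end{equation*}

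Finally I would invoke the known classification of correlations in Bell-diagonal states: such a state carries non-vanishing quantum discord precisely when at least two of its three correlation coefficients are non-zero (equivalently, its geometric discord, proportional to the sum of the two smaller squared coefficients, is strictly positive). Since $c_i=\prod_{k<i}\frac{1+\eta_k}{2}$ is a product of factors each $\ge\frac12$, the coefficient $c_i$ is strictly positive for every $i$ and every admissible choice of sharpnesses, and it occupies two of the three diagonal slots of $T^{(i)}$. Hence at least two correlation coefficients are non-zero and $\rho^i$ is discordant for the subsequent Bob at every step.

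The step I expect to be the main obstacle is verifying that the $\phi_i$-averaged Kraus map is \emph{exactly} the diagonal Bloch-scaling displayed above and that this form is reproduced under the recursion, i.e. that averaging the rotating dephasing axis over the equator generates no spurious local Bloch vectors and no off-diagonal correlations. Once that structural invariance is established the conclusion is immediate, and the conceptually important point falls out cleanly: averaging over the whole equator smears the dephasing direction, so the in-plane correlations are merely rescaled by a factor bounded below by $\tfrac12$ and are never annihilated---even in the sharp limit $\lambda_i\to1$---which guarantees that discord survives every round.
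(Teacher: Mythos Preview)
Your proposal is correct and follows essentially the same route as the paper: both obtain the Bell-diagonal form of $\rho^i$ with coefficients $c_{i1}=c_{i2}=-\prod_{k<i}\tfrac{1+\sqrt{1-\lambda_k^2}}{2}$ and $c_{i3}=-\prod_{k<i}\sqrt{1-\lambda_k^2}$, and then conclude that the geometric discord is non-zero because the two equal in-plane coefficients never vanish. Your channel-theoretic derivation---recognising the non-selective map as partial dephasing along $\hat n(\phi_i)$ and averaging to the diagonal Bloch scaling $\mathrm{diag}\big(\tfrac{1+\eta}{2},\tfrac{1+\eta}{2},\eta\big)$---makes explicit what the paper merely asserts can be obtained from Eq.~(\ref{gs}) by direct computation, but the logical skeleton and the discord criterion invoked are identical.
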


\begin{proof}
From Bob$^2$ onwards, the pre-measurement state between Alice and Bob$^i$ can be obtained from Eq.(\ref{gs}) by using singlet state as the initial state and by averaging over all the input states from a given circle of the Bloch sphere with fixed $\theta$ and measurement outcomes chosen by the previous Bob. This becomes a Bell-diagonal state of the form
\begin{equation}
\rho^i = \frac{1}{4} \Big(\mathbb{I}_2 \otimes \mathbb{I}_2 + \sum_{j=1}^3 c_{ij} ~\sigma_j \otimes \sigma_j \Big), ~~~~~(i\geq 2)
\label{BDState}
\end{equation}
where, $\lbrace \sigma_j \rbrace_{j=1}^3$ are Pauli spin matrices and the co-efficients ($0\leq |c_{ij}| \leq 1 ~\forall i,j$) turn out to be,
\begin{align}
&c_{i1} = c_{i2} = - \prod_{k=1}^{i-1} [\sqrt{1-\lambda_k^2} + \frac{1}{2} (1-\sqrt{1-\lambda_k^2}) \sin^2 \theta], \nonumber\\
&c_{i3} = - \prod_{k=1}^{i-1} [1- (1-\sqrt{1-\lambda_k^2}) \sin^2 \theta], ~~~(i\geq 2).
\label{BDS}
\end{align}
Note that using singlet state $\rho^1$ and unsharp measurements $E_a^{\lambda_1} \forall a$, the form of $\rho^2$ can be obtained from Eq.(\ref{gs}). Now $E_a^{\lambda_1}$ depends on either $|\psi^i\rangle$ or $|\psi^i_{\bot}\rangle$ which is a function of azimuthal angle $\phi_i$ by using a fixed polar angle $\theta$. Then $\rho^2$ takes the form of a Bell-diagonal state given by Eq.(\ref{BDState}) by using $i=2$. Similarly, using $\rho^2$ and unsharp measurements $E_a^{\lambda_2} \forall a$, Eq.(\ref{gs}) can be further employed to obtain $\rho^3$, which has the form of Eq.(\ref{BDState}) with $i=3$. Thus Eq.(\ref{BDState}) is a generalised form of shared state between Alice and subsequent Bobs having the state co-efficients given by Eq.(\ref{BDS}).

The correlation matrix $M^i$, which is constructed by the elements $\lbrace M^i_{pq} | M^i_{pq} = \operatorname{Tr} [(\sigma_p \otimes \sigma_q)\rho^i]\rbrace_{p,q=1}^3$, has eigenvalues $\lbrace c_{i1}, c_{i2}, c_{i3} \rbrace$. It is easy to check that two of the eigenvalues, $c_{i1}=c_{i2}\neq 0 ~\forall \lambda_k \in [0,1], ~(k=1,...,i)$. The geometric quantum discord of $\rho^i$ can be calculated by the minimum trace distance from the set of zero-discord classical states ($\eta$) \cite{Discord,Discord1} as follows:
\begin{align}
\mathbb{D}^{(2)} (\rho^i) =& 2 \min_{\eta} \parallel \rho^i - \eta \parallel^2 ~= 2 \min_{\eta} \operatorname{Tr} (\rho^i - \eta)^2 \nonumber\\
=& \frac{1}{2} (c_{i1}^2 + c_{i2}^2 + c_{i3}^2 - \max \lbrace c_{i1}^2, c_{i2}^2, c_{i3}^2 \rbrace).
\end{align}
Hence, the geometric discord of the state shared between Alice and Bob$^i$ remains always non-zero for all $i$. This implies that the average state remains resourceful for the next round of RSP.
\end{proof}

Note that, as the joint state $\rho^i$, shared between Alice and Bob$^i$ ($i\geq 2$), belongs to the family of Bell-diagonal states with maximally mixed marginals~\cite{Horodecki96,Luo}, Bob$^i$ performs an unsharp measurement on his subsystem in the the basis $\lbrace |\psi^i\rangle, |\psi^i_{\bot}\rangle \rbrace$ and sends the result to Alice. Depending upon the outcome, the normalised conditional state at Alice's side becomes either $\rho^i_{A|E_+^{\lambda_i}}$ or $\rho^i_{A|E_-^{\lambda_i}}$ by using Eq.(\ref{cond}). We find that,
\begin{align}
\rho^i_{A|E_{\pm}^{\lambda_i}} = \frac{1}{2} [(1\pm a_i) |0\rangle\langle 0| \pm b_i |0\rangle\langle 1| \pm b_i^* |1\rangle\langle 0| + (1\mp a_i) |1\rangle\langle 1|]
\label{BDCond}
\end{align} 
where, \begin{align*}
&a_i = \lambda_i ~\cos\theta ~c_{i3}, \\
&b_i = \lambda_i ~e^{-i\phi_i} ~\sin\theta ~c_{i1} = \lambda_i ~e^{-i\phi_i} ~\sin\theta ~c_{i2} 
\end{align*}
and $b_i^*$ is the complex conjugate of $b_i$. Whenever Bob$^i$ wants to prepare $|\psi^i\rangle$ or $|\psi^i_{\bot}\rangle$ chosen from a given circle with polar angle $\theta$ on the Bloch sphere, then the average fidelity between the prepared and the desired state becomes $f_{av}= \frac{1}{2\pi} \int_0^{2\pi} \langle \psi^i | \rho^i_{A|E_-^{\lambda_i}} | \psi^i \rangle ~d\phi_i=\frac{1}{2\pi} \int_0^{2\pi} \langle \psi^i_{\bot} | \rho^i_{A|E_+^{\lambda_i}} | \psi^i_{\bot} \rangle ~d\phi_i = \frac{1}{2} - \frac{\lambda_i}{4} [(c_{i1} + c_{i2}) \sin^2 \theta + 2c_{i3} \cos^2 \theta]$. It corresponds to the 50\% successful RSP when $\theta \in (0,\frac{\pi}{2})\cup (\frac{\pi}{2},\pi)$. Whereas for completely successful RSP with $\theta=\frac{\pi}{2}$, the fidelity, averaged over all input states and measurement outcomes, becomes 
\begin{align}
f_{av}^{AB^i} =& \frac{1}{2\pi} \int_0^{2\pi} (p_{+}^i \langle \psi^i | \sigma_z.\rho^i_{A|E_+^{\lambda_i}}.\sigma_z | \psi^i \rangle + p_{-}^i \langle \psi^i | \rho^i_{A|E_-^{\lambda_i}} | \psi^i \rangle) ~d\phi_i \nonumber\\
=& \frac{1}{2\pi} \int_0^{2\pi} \langle \psi^i | \rho^i_{A|E_-^{\lambda_i}} | \psi^i \rangle ~d\phi_i \nonumber\\
=& \frac{1}{2} - \frac{\lambda_i (c_{i1} + c_{i2})}{4},
\label{fidBDS}
\end{align}
where $p_{+}^i = p_{-}^i = \frac{1}{2} ~\forall c_{ij},\phi_i,\lambda_i$. Now $\rho^i$ reduces to the well-known Werner state~\cite{Werner} with $c_{i1}=c_{i2}=c_{i3}=-c$, i.e.,
\begin{equation}
\rho_W = c |\psi^-\rangle\langle\psi^-| + \frac{1-c}{4} (\mathbb{I}_2 \otimes \mathbb{I}_2),
\label{Werner}
\end{equation}
where $0\leq c \leq 1$. Using $\rho_W$ and projective (sharp) measurements (i.e. $\lambda_i=1$) at Bob's side, we have the average RSP-fidelity as
\begin{equation}
f_{av}(\rho_W) = \frac{1+c}{2}, 
\end{equation}
which yields the quantum advantage at its best when $\theta=\frac{\pi}{2}$. Hence the non-classicality of RSP-fidelity corresponding to input states from the equatorial plane of the Bloch sphere is achieved for Werner state when $c>\frac{1}{2}$. The non-classical advantage is realised by comparing the results with the classical strategy in a particular scenario where the RSP-protocol is described. Whenever the shared state between Alice and Bob is discordant with maximally mixed marginals, it has the average RSP-fidelity $\frac{1}{2} < f_{av} \leq 1$. Fidelity is $1$ for the maximally entangled or Bell states, whereas it goes to $\frac{1}{2}$ iff the shared state is maximally mixed. In fact, RSP-fidelity goes down as the mixedness of the shared state increases as mentioned in~\cite{Discord}.

\subsection{Remote states chosen from the equatorial great circle of the Bloch sphere}

Here the polar angle of the remote states are fixed at $\theta=\frac{\pi}{2}$ on the Bloch sphere and this is known to Alice and all the subsequent Bobs.

\begin{theorem}
At most 6 number of Bobs can share the task of preparing remote state chosen from the equatorial circle of the Bloch sphere at Alice's side with non-classical advantage achieved by all of them when a maximally entangled state is initially shared between Alice and Bob$^1$.
\end{theorem}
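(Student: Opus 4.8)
The plan is to collapse the whole multi-Bob problem onto a single scalar recursion in the sharpness parameters and then to show that a ``greedy'' (minimal-disturbance) choice of those parameters is what controls the maximal chain length. First I would substitute the equatorial Bell-diagonal coefficients of Eq.(\ref{BDS}) into the fidelity expression Eq.(\ref{fidBDS}). Using $c_{i1}=c_{i2}=-\tfrac{1}{2^{i-1}}\prod_{k=1}^{i-1}\bigl(1+\sqrt{1-\lambda_k^2}\bigr)$, the average RSP-fidelity of Bob$^i$ becomes
\[
f_{av}^{AB^i}=\frac{1}{2}+\frac{\lambda_i}{2^{i}}\,\prod_{k=1}^{i-1}\bigl(1+\sqrt{1-\lambda_k^2}\bigr).
\]
Writing $P_{i-1}=\prod_{k=1}^{i-1}\bigl(1+\sqrt{1-\lambda_k^2}\bigr)$ with $P_0=1$, and recalling that the classical bound on the equatorial circle is $f_{cl}^{\max}=\tfrac34$, the requirement $f_{av}^{AB^i}>\tfrac34$ that Bob$^i$ beat the classical bound reduces to the clean inequality $\lambda_i\,P_{i-1}>2^{\,i-2}$. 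For $i=1$ this gives $\lambda_1>\tfrac12$, recovering the threshold implied by the first Lemma.

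The key structural fact — the step I expect to be the main obstacle — is that the resource $P_{i-1}$ available to Bob$^i$ is strictly decreasing in every earlier $\lambda_k$, while each earlier Bob is constrained only from below by $\lambda_k>2^{k-2}/P_{k-1}$. I would therefore prove that the greedy assignment, in which every Bob but the last saturates its own threshold $\lambda_k=2^{k-2}/P_{k-1}$, simultaneously maximizes $P_i$ for all $i$. The argument is inductive: assuming the greedy choice maximizes $P_{i-1}$, the admissible $\lambda_i$ is minimized (since $\lambda_i^{\min}=2^{i-2}/P_{i-1}$ shrinks as $P_{i-1}$ grows), and with this choice
\[
P_i=P_{i-1}\Bigl(1+\sqrt{1-\bigl(2^{i-2}/P_{i-1}\bigr)^2}\Bigr)=P_{i-1}+\sqrt{P_{i-1}^2-2^{\,2(i-2)}},
\]
whose derivative in $P_{i-1}$ is $1+P_{i-1}/\sqrt{P_{i-1}^2-2^{2(i-2)}}>0$; hence the greedy choice also maximizes $P_i$. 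Consequently feasibility of Bob$^n$ for \emph{some} strategy is equivalent to $P_{n-1}^{\mathrm{greedy}}>2^{\,n-2}$, so if the greedy sequence cannot support an $n$-th Bob, no sequence can.

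It then remains to run the recursion. Starting from $P_0=1$, $\lambda_1=\tfrac12$, one finds $\lambda_2\approx0.536$, $\lambda_3\approx0.581$, $\lambda_4\approx0.641$, $\lambda_5\approx0.725$, $\lambda_6\approx0.859$, all lying in $(0,1)$, with $P_5\approx18.6$ and $P_6\approx28.2$ (these are the entries of Table \ref{tab:t1}). Since $P_5>2^{4}=16$, six Bobs can indeed all exceed $\tfrac34$, the sixth being free to take a sharp measurement ($\lambda_6=1$, giving $f_{av}^{AB^6}\approx0.79$) as it has no successor. For a putative seventh Bob, however, $P_6\approx28.2<2^{5}=32$, so the success condition $\lambda_7>2^{5}/P_6>1$ cannot be met by any valid POVM sharpness $\lambda_7\le1$; equivalently, even $\lambda_7=1$ yields $f_{av}^{AB^7}=\tfrac12+P_6/2^{7}\approx0.720<\tfrac34$. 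Because $P_6$ is maximal under the greedy strategy, this obstruction persists for every admissible $\lambda_1,\dots,\lambda_6$, so no seventh Bob can ever beat the classical bound. I would close by noting the consistency with Fig.\ref{resource}: the post-Bob$^6$ state still carries nonzero geometric discord (and marginally positive concurrence), yet neither is enough to push the RSP fidelity past $\tfrac34$, confirming that at most six Bobs can share the task.
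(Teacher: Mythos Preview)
Your proof is correct and follows the same computational skeleton as the paper: both derive the closed form $f_{av}^{AB^i}=\tfrac12+\tfrac{\lambda_i}{2^i}\prod_{k<i}(1+\sqrt{1-\lambda_k^2})$, impose $f_{av}^{AB^i}>\tfrac34$, iterate the resulting thresholds from $\lambda_1>\tfrac12$ through $\lambda_6>0.859$ (reproducing Table~\ref{tab:t1}), and then observe that even $\lambda_7=1$ leaves $f_{av}^{AB^7}\approx0.72<\tfrac34$.

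Where you go beyond the paper is in the optimality argument. The paper simply asserts that the minimum of each $\lambda_i$ occurs when all previous $\lambda_k$ sit at their own infima, and then reads off the numbers; it never proves that this greedy limiting sequence actually controls the \emph{maximal} chain length over all admissible choices. Your introduction of $P_{i-1}$ and the inductive monotonicity step---showing that $P_i=P_{i-1}+\sqrt{P_{i-1}^2-4^{\,i-2}}$ is increasing in $P_{i-1}$, so that greedy maximizes every $P_i$ and hence minimizes every threshold---is a genuine logical tightening. It converts the paper's numerical observation into a proof that \emph{no} admissible sequence can support a seventh Bob, rather than merely that the obvious candidate sequence cannot. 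The cost is modest (one derivative and an induction), and what it buys is that the word ``at most'' in the theorem statement is now fully justified rather than left to the reader.
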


\begin{proof}
Suppose that, Bob$^i$ wishes to prepare remote state and he communicates the result of his unsharp measurements $E_{\pm}^{\lambda_i}$ to Alice . By applying Eq.(\ref{BDCond}), the conditional state prepared at Alice's side from the states $\rho^i$ as given by Eq.(\ref{BDState}) with $\theta=\frac{\pi}{2}$, becomes either corresponding to the outcome (+) as,
\begin{align}
\rho^i_{A|E_+^{\lambda_i}} =& \frac{\lambda_i}{2^{i-1}} ~\prod_{k=1}^{i-1} \Big(1+\sqrt{1-\lambda_k^2}\Big) ~|\psi^i_{\bot}\rangle\langle\psi^i_{\bot}| \nonumber\\
&+ \frac{1}{2} \Big[ 1- \frac{\lambda_i}{2^{i-1}} ~\prod_{k=1}^{i-1} \Big(1+\sqrt{1-\lambda_k^2}\Big) \Big] ~\mathbb{I}_2, ~~~~~(i\geq 2)
\label{cond1}
\end{align}
or corresponding to the outcome (-) as,
\begin{align}
\rho^i_{A|E_-^{\lambda_i}} =& \frac{\lambda_i}{2^{i-1}} ~\prod_{k=1}^{i-1} \Big(1+\sqrt{1-\lambda_k^2}\Big) ~|\psi^i\rangle\langle\psi^i| \nonumber\\
&+ \frac{1}{2} \Big[ 1- \frac{\lambda_i}{2^{i-1}} ~\prod_{k=1}^{i-1} \Big(1+\sqrt{1-\lambda_k^2}\Big) \Big] ~\mathbb{I}_2, ~~~~~(i\geq 2).
\label{cond2}
\end{align}

Finally depending upon the CC made by Bob$^i$, Alice applies local unitary operation chosen from the set $\lbrace \mathbb{I}_2, \sigma_z \rbrace$, which is eventually the optimal set of unitaries, giving rise to maximum RSP-fidelity\footnote{Let us consider that, Alice applies a general 2$\times$2 unitary matrix having the form, $U=\begin{pmatrix} \cos(\frac{\zeta}{2}) ~e^{i(\frac{\iota+\kappa}{2})} & \sin(\frac{\zeta}{2}) ~e^{-i(\frac{\iota-\kappa}{2})}\\
-\sin(\frac{\zeta}{2}) ~e^{i(\frac{\iota-\kappa}{2})} & \cos(\frac{\zeta}{2}) ~e^{-i(\frac{\iota+\kappa}{2})}\\
\end{pmatrix}$ next to her prefixed chosen set of unitaries $\lbrace \mathbb{I}_2, \sigma_z \rbrace$. Hence the average fidelity in the given scenario between Alice and Bob$^i$ ($i\geq 2$) becomes, $f_{av}^{AB^i} \big(\rho_{A|E_{+(-)}^{\lambda_i}}^i, |\psi^i_{(\bot)}\rangle \big) = \frac{1}{2} + \frac{\lambda_i}{2^i} ~\prod_{k=1}^{i-1} \big(1+\sqrt{1-\lambda_k^2}\big) ~[\cos^2(\frac{\zeta}{2}) ~\cos(\iota+\kappa) - \sin^2(\frac{\zeta}{2}) ~\cos(\iota-\kappa-2\phi_i)]$, which is maximum when $\zeta=\iota=\kappa= 0 ~\text{or} ~2\pi$, i.e., $U=\mathbb{I}_2$.}. Therefore, the average RSP-fidelity  derived from Eq.(\ref{psi100}) or Eq.(\ref{psi_perp100}) by considering all input states and measurement outcomes  with the help of Eq.(\ref{cond1}) and Eq.(\ref{cond2}) becomes (see Appendix \ref{A})
\begin{align}
f_{av}^{AB^i} = \frac{1}{2} + \frac{\lambda_i}{2^i} \prod_{k=1}^{i-1} \Big(1+\sqrt{1-\lambda_k^2}\Big), ~~~~~(i\geq 2).
\label{avfid}
\end{align}
Eq.(\ref{avfid}) can be reproduced by plugging Eq.(\ref{BDS}) into Eq.(\ref{fidBDS}) with $\theta=\frac{\pi}{2}$. Note that Eq.(\ref{avfid}) is obtained without a need for the post-selection subject to the measurement done by Bob$^i$ as $|\psi^i_{\bot}\rangle$ can be transformed into $|\psi^i\rangle$ under the action of $\sigma_z$ for the remote states from the equatorial circle of the Bloch sphere.

The success of Bob$^i$ is indicated by $f_{av}^{AB^i} > \frac{3}{4}$ compared to the classical strategy, and there can be found a range of $\lambda_i, (i\geq 1)$ in each iteration ($i$) for which such fidelity can be achieved. Using the range of $\lambda_j, (j<i)$ corresponding to all the previous Bobs upto Bob$^j$ $\forall j \in \lbrace 1,2,...,i-1 \rbrace$, being capable of performing RSP at Alice's side, a new range of $\lambda_i$ can be specified  for the i-th Bob. Proceeding in this way,  we obtain the particular ranges of sharpness parameters given by the second column of Table \ref{tab:t1}, for which quantum fidelity of RSP is achieved up to the corresponding Bobs. 

As the number of Bobs increases, the last Bob in the sequence has to perform sharper measurements in order to prepare remote state at Alice's side with average fidelity $> \frac{3}{4}$.  The condition to achieve the task of RSP for Bob$^i$, depending upon the ability of all the previous Bobs upto Bob$^{i-1}$ to do the same task in their turn, determines the minimum sharpness of measurement for Bob$^i$. For example, Bob$^1$ is able to accomplish the task of RSP when $\frac{1}{2} < \lambda_1 \leq 1$. Next, Bob$^2$ prepares remote states from the equatorial great circle of the Bloch sphere with average RSP-fidelity given by Eq.(\ref{avfid}) with $i=2$. Now, the occurrence of $f_{av}^{AB^2} > \frac{3}{4}$ implies that, $\frac{1}{1+\sqrt{1-\lambda_1^2}} < \lambda_2 \leq 1$ under the restriction of Bob$^1$'s ability to perform the task of RSP, i.e., within the limit $\frac{1}{2} < \lambda_1 \leq 1$. Hence, the numerical minimum of $\lambda_2$ occurs when $\lambda_1$ tends to $\frac{1}{2}$, i.e., $\lim_{\lambda_1 \rightarrow \frac{1}{2}^+} \frac{1}{1+\sqrt{1-\lambda_1^2}} = 0.536$. In this way, the minimum sharpness of measurements for subsequent Bobs to perform RSP are calculated and the range of sharpness parameters are shown in the second column of Table \ref{tab:t1}.

%\begin{center}
\begin{table}[h!]
\centering
\begin{adjustbox}{width=0.6\textwidth}
 \begin{tabular}{| c | c | c |} 
 \hline
 $i ~\hat{=}$ Bob$^i$ & Range of $\lambda_i$ & $\max_{\lbrace \lambda_k \rbrace_{k=1}^i} \frac{\sum_{k=1}^i f_{av}^{AB^k}}{\sum_{k=1}^i k}$ \\ 
 \hline
 1 & (0.5 - 1] & 1 \\ 
 \hline
 2 & (0.536 - 1] & 0.904 \\
 \hline
 3 & (0.581 - 1] & 0.849 \\
 \hline
 4 & (0.641 - 1] & 0.812 \\
 \hline
 5 & (0.725 - 1] & 0.785 \\  
 \hline
 6 & (0.859 - 1] & 0.764 \\
 \hline
\end{tabular}
\end{adjustbox}
\caption{The domain of sharpness parameters for which 6 Bobs  can attain average RSP-fidelity $> \frac{3}{4}$ by using Eq.(\ref{avfid}). The maximum  average RSP-fidelity at each round of the protocol is shown in the third column where it can be seen that  our optimal classical bound, i.e., $\frac{3}{4}$ can be violated by a significant amount.}
\label{tab:t1}
\end{table}
%\end{center} 

It can be checked that after the 6-th Bob accomplishing the task of RSP, even a sharp measurement by Bob$^{7}$ can achieve a maximum of 0.72 as the value of average RSP-fidelity. This shows that it is not possible for Bob$^{7}$ onwards to execute the task with the required quantum fidelity. 

In the third column of Table \ref{tab:t1} we display the maximum average RSP-fidelity ($\max_{\lbrace \lambda_k \rbrace_{k=1}^i} \frac{\sum_{k=1}^i f_{av}^{AB^k}}{\sum_{k=1}^i k}$) that can be achieved by the corresponding number of Bobs at every stage of the RSP-protocol.  By considering multiple Bobs up to Bob$^6$, Bobs on average can at most achieve the given values of average RSP-fidelities which are well above the classical limit $\frac{3}{4}$. On the other hand, by applying the same technique for all Bobs up to Bob$^7$, the average of RSP-fidelity can attain a maximum of $0.747$, subject to the sharp measurement done by Bob$^7$, which is less than our classical limit of RSP-fidelity. We assume that the same technique is followed by all the Bobs, so that no Bob is allowed to change the strategy to a classical or hybrid strategy instead of the quantum strategy, at will. The task of RSP can not be processed further when there is no way to achieve non-classical advantage by the last Bob in the sequence. Hence, at most 6 Bobs are able to sequentially manifest the task of RSP under any circumstances in the given framework where all of them achieve success against our optimal classical strategy.
\end{proof}

\subsubsection*{Geometric Discord vs Concurrence}

%Geometric quantum discord of a state $\rho^i$ is calculated by the minimum trace distance from the set of zero-discord classical states ($\eta$) \cite{Discord,Discord1} as follows:
%\begin{align}
%\mathbb{D}^{(2)} (\rho^i) =& 2 \min_{\eta} \parallel \rho^i - \eta \parallel^2 ~= 2 \min_{\eta} \operatorname{Tr} (\rho^i - \eta)^2 \nonumber\\
%=& \frac{1}{2} (c_{i1}^2 + c_{i2}^2 + c_{i3}^2 - \max \lbrace c_{i1}^2, c_{i2}^2, c_{i3}^2 \rbrace)
%\end{align}

The quantification of resources is important to characterize the efficacy of a protocol, which may be continued further by utilizing the resources in the subsequent steps. Here we consider geometric quantum discord($\mathbb{D}^{(2)}$) and concurrence($\mathbb{C}$) as the resources for the subsequent Bobs participating in the the RSP-protocol. Corresponding to Bob$^i$, we have $c_{i1}^2 = c_{i2}^2 \geq c_{i3}^2$ for all $i\geq 2$ by using $\theta=\frac{\pi}{2}$ in Eq.(\ref{BDS}). Hence 
\begin{equation}
\mathbb{D}^{(2)} (\rho^{i_{\geq 2}}) = \frac{1}{2} \Big(c_{i1}^2|_{(\theta=\frac{\pi}{2})} + c_{i3}^2|_{(\theta=\frac{\pi}{2})}\Big),
\end{equation}
where $c_{i1}|_{(\theta=\frac{\pi}{2})} = c_{i2}|_{(\theta=\frac{\pi}{2})} = -\frac{1}{2^{i-1}} ~\prod_{k=1}^{i-1} (1+\sqrt{1-\lambda_k^2})$ and $c_{i3}|_{(\theta=\frac{\pi}{2})} = - \prod_{k=1}^{i-1} \sqrt{1-\lambda_k^2}$. For Bob$^1$, $\mathbb{D}^{(2)} (|\psi^-\rangle\langle\psi^-|) = 1$. The maximum utilizable resource for Bob$^{i}$ ($i\geq 2$) to perform the task of RSP becomes, $\max_{\lbrace\lambda_k\rbrace_{k=1}^{i-1}} \mathbb{D}^{(2)} (\rho^{i})$.

On the other hand, non-zero measure of concurrence implies a bipartite qubit state to be entangled~\cite{Hill}. Concurrence is maximum for a Bell state, e.g. $\mathbb{C}(|\psi^-\rangle\langle\psi^-|)=1$, whereas for a Bell diagonal state $\rho^i$ ($i\geq 2$), we can construct a matrix $\mathbb{R}=\sqrt{\sqrt{\rho^i} ~(\rho^i)^* \sqrt{\rho^i}}$ and $\rho^i$ has the same eigenspectrum as that of $\mathbb{R}$. Let us call the largest eigenvalue of $\rho^i$ as $\tau^i$. Therefore, the concurrence function~\cite{Quan} can be expressed as,
\begin{equation}
\mathbb{C}(\rho^i) = \max \lbrace 0, 2\tau^i -1 \rbrace
\end{equation}
where we have 
\begin{equation}
\tau^i = \frac{1}{4} \Big[ 1+ \prod_{k=1}^{i-1} \sqrt{1-\lambda_k^2} + \frac{1}{2^{i-2}} \prod_{k=1}^{i-1} \Big( 1 + \sqrt{1-\lambda_k^2} \Big)\Big], ~~~~~(i\geq 2)
\end{equation}
corresponding to $\theta=\frac{\pi}{2}$. The maximum resource in terms of entanglement between Alice and Bob$^i$ ($i\geq 2$) becomes, $\max_{\lbrace\lambda_k\rbrace_{k=1}^{i-1}} \mathbb{C}(\rho^{i})$. The maximization of the resources, whether it is geometric discord or entanglement w.r.t. the $\lambda_k$ values, are chosen such that, for all the previous Bobs the state remains useful for RSP outside the classical domain characterized by our classical strategy. In other words, the maximum resource remaining for Bob$^i$ is such that, all Bobs from Bob$^1$ to Bob$^{i-1}$ are successful in the task of RSP. We emphasize that this is not equivalent to sharing of resource in terms of geometric discord or entanglement among multiple observers at one side. 

We compute the maximum available resources in terms of both geometric discord and concurrence that  remain after every successful step of RSP against our classical strategy. The resource gets diminished after subsequent measurements of Bobs. In each case we can calculate the maximum utilizable resource present in the shared state to know whether the state can be further used for RSP or not. For instance, after measurements by Bobs successful for RSP upto Bob$^{i-1}$, the maximum discord that remains in the state $\rho^i$ shared between Alice and Bob$^i$ can be calculated such that Bob$^i$'s ability to perform RSP by using $\rho^i$ can be known. The maximization depends on the ability of all Bobs upto Bob$^{i-1}$ to perform RSP specified by the range of sharpness parameters shown in the column 2 of Table \ref{tab:t1} where the remote states are prepared from the equatorial great circle of the Bloch sphere. Similarly, the concurrence as a measure of entanglement for the states $\rho^i$ are maximized. 

\begin{figure}[!ht]
\centering
\includegraphics[width=0.75\linewidth]{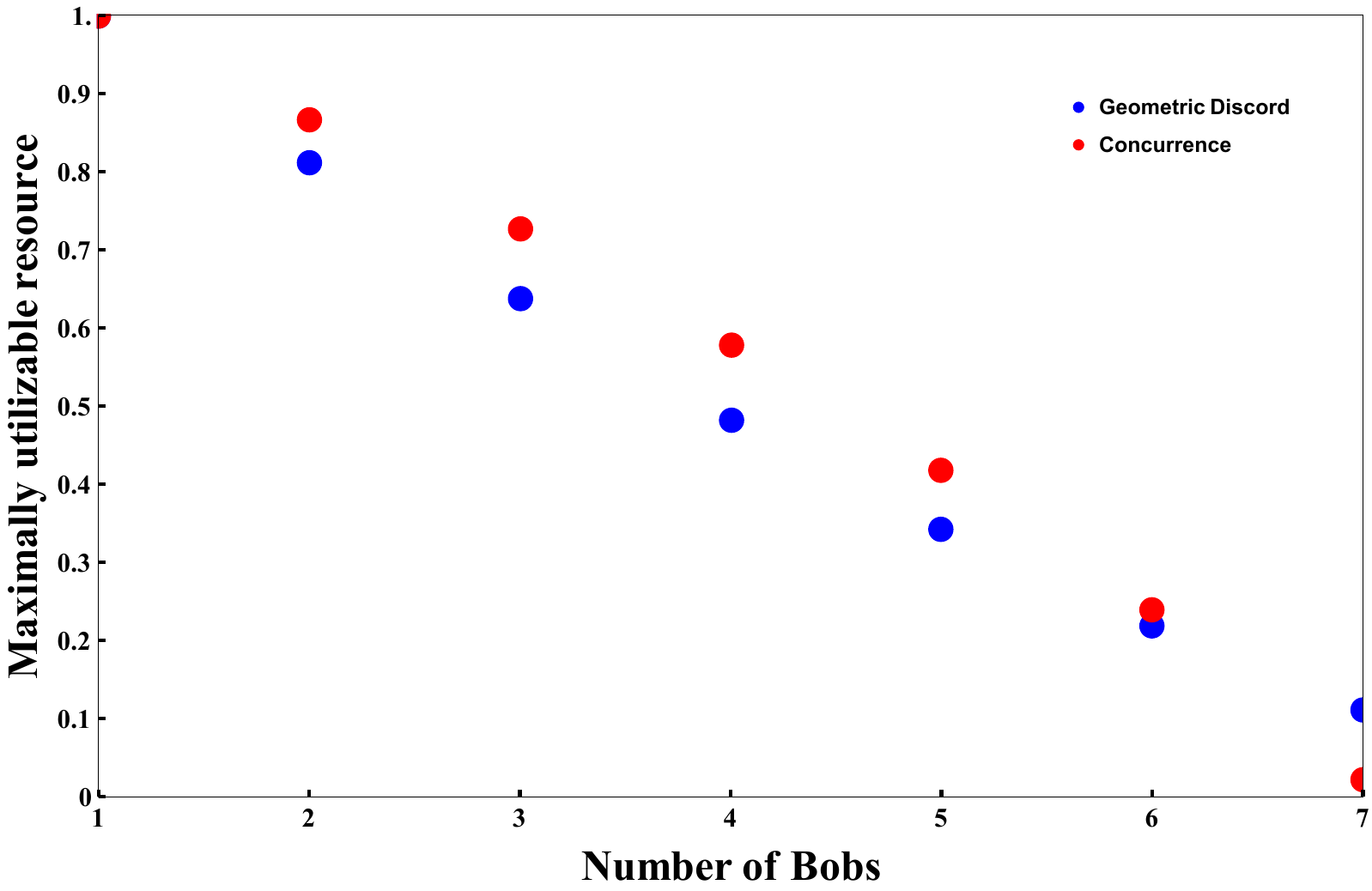}
\caption{\footnotesize (Color Online) The maximum geometric discord and concurrence available for different number of Bobs are presented whenever the RSP, with fidelity higher than that predicted by our classical strategy, is possible upto its previous Bobs.}
\label{resource}
\end{figure}

We can thus plot in Fig.\ref{resource} the geometric discord and concurrence which the Bobs can maximally utilize in the successive steps to perform RSP successfully beyond the classical region delimited by our classical strategy. In comparison with this, we observe that even after the success of Bob$^{6}$ in the task of RSP, both the geometric quantum discord and the concurrence remain positive which indicates the possibility for the 7-th Bob to perform RSP as well. From Table \ref{tab:t1}, it can be easily anticipated that, $\lambda_7$ is no less than $0.859$ and for such value of $\lambda_7$, the concurrence vanishes while the geometric discord still remains positive. However, in contrast to such possibility it is shown earlier that sequential success in the RSP-protocol by all the Bobs beyond Bob$^6$ can not be realised with non-classical advantage.

It may be noted that the average fidelity of the i-th state produced by Bob$^i$ at Alice's side reduces with the sharpness of the i-th measurement. Here the minimum of $\lambda_i \forall i$ plays an important role to obtain the maximum number of possible Bobs in the given framework, which we aim to figure out in different contexts. If one considers sharpness parameters higher than the minimum value, then  the number of Bobs sharing the task of RSP will be  lower than the maximum number of Bobs derived in our work. The range of sharpness parameters, i.e., $(\lambda_i)_{\min} < \lambda_i \leq 1$ are defined corresponding to the range of average RSP-fidelity $\frac{3}{4} < f_{av}^{AB^i} \leq 1$. The minimum of sharpness parameters, $(\lambda_i)_{\min}$ correspond to $f_{av}^{AB^i}$ tending to $\frac{3}{4}$ which implies a negligible amount of quantum advantage in the RSP-protocol and becomes difficult to test in practice. Hence, by fixing $f_{av}^{AB^i}$ at a higher value, the minimum of $\lambda_i$ will be higher, and a lesser number of Bobs will be able to accomplish RSP. Nonetheless, if we fix the number of sequential Bobs to be 6, then the Bobs  can optimally achieve the average RSP-fidelity of 0.764, as presented in the third column of Table \ref{tab:t1}, where the quantum advantage is not negligible. The experimental viability of such a protocol may become clearer with further robustness analysis.  On the other hand, even a negligible quantum superiority can not be attained on average by fixing 7 number of Bobs in a sequence and by applying the same approach to calculate the average RSP-fidelity for all of them.

It may be pertinent to note here that since the bound on the number of Bobs depends on the classical strategy with respect to  which the quantum advantage is manifested, various other contexts may be studied to realize the bound under dissimilar frameworks. For example, when the classical bound on the RSP-fidelity is defined based on a different strategy  under special circumstances~\cite{Horodecki14},  by comparing it with the quantum strategy, it may be observed that at most 3 sequential Bobs can achieve success in the task of preparing remote states from the equatorial circle of the Bloch sphere (see Appendix \ref{AA}).

\subsection{Remote states chosen from a non-equatorial circle of the Bloch sphere}

Let us suppose that the remote state to be prepared at Alice's side is  chosen from any non-equatorial circle with fixed $\theta$ (i.e. not equal to $\frac{\pi}{2}$) on the Bloch sphere. This is known to all the Bobs and Alice. Here Alice applies suitable post-selection technique corresponding to the CC received from Bobs in each iteration as the RSP-protocol is 50\% successful in every case due to non-feasibility of the conversion between $|\psi^i\rangle$ and $|\psi_{\bot}^i\rangle ~\forall i$.

\begin{theorem}
The maximum number of Bobs sharing the task of RSP at Alice's side from a non-equatorial circle of the Bloch sphere with $\theta \neq \frac{\pi}{2}$, is less than or equal to 6 with non-classical advantage achieved by all of them when Alice shares a maximally entangled state with Bob$^1$ initially. As the choice of the circle moves from the neighbourhood of the equatorial plane towards the poles of the Bloch sphere in either direction, the maximum number of Bobs reduces gradually to zero.
\end{theorem}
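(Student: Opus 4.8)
The plan is to run the machinery of Theorem~1 again, but now carry the polar angle $\theta$ through every stage, since both the benchmark $f_{cl}^{\max}(\theta)$ and the shared state depend on it. First I would check that, just as in Lemma~2, the non-selective update of Eq.(\ref{gs}) keeps the Alice--Bob$^{i}$ state Bell-diagonal even for $\theta\neq\frac{\pi}{2}$: averaging the single-qubit unsharp-measurement channel over $\phi_i$ at fixed $\theta$ acts on Bob's Bloch vector by a diagonal map $\operatorname{diag}(m_1,m_1,m_3)$ with $m_1=\sqrt{1-\lambda_i^2}+(1-\sqrt{1-\lambda_i^2})\tfrac{\sin^2\theta}{2}$ and $m_3=\sqrt{1-\lambda_i^2}+(1-\sqrt{1-\lambda_i^2})\cos^2\theta$, so the correlation matrix stays diagonal and the recursion becomes $c_{i+1,1}=c_{i+1,2}=c_{i1}\,m_1$, $c_{i+1,3}=c_{i3}\,m_3$, collapsing to Eq.(\ref{BDS}) at $\theta=\frac{\pi}{2}$. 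Combining the conditional-fidelity formula $\tfrac12-\tfrac{\lambda_i}{4}[(c_{i1}+c_{i2})\sin^2\theta+2c_{i3}\cos^2\theta]$ with the $50\%$ bookkeeping (Alice keeps the down outcome and falls back to $f_{cl}^{\max}$ on the up outcome, each with probability $\tfrac12$) yields $f_{av}^{AB^i}(\theta)=\tfrac12 f_{cl}^{\max}(\theta)+\tfrac12 g_i(\theta)$, where $g_i$ is the $\phi_i$-averaged down-outcome fidelity. Hence the success criterion $f_{av}^{AB^i}>f_{cl}^{\max}$ collapses to the single clean inequality $g_i(\theta)>f_{cl}^{\max}(\theta)$.

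Next I would make the $\theta$-dependence explicit. Writing $f_{cl}^{\max}(\theta)=1-\tfrac12\sin^2\theta+\tfrac14\sin^3\theta$ and $|c_{ij}|=-c_{ij}$, the advantage factorises as
\begin{equation}
g_i-f_{cl}^{\max}=\sin^2\theta\,\Big[\tfrac{\lambda_i|c_{i1}|}{2}-\tfrac{\sin\theta}{4}\Big]+\cos^2\theta\,\Big[\tfrac{\lambda_i|c_{i3}|}{2}-\tfrac12\Big].
\end{equation}
The second bracket is manifestly non-positive, since $\lambda_i\le1$ and $|c_{i3}|=\prod_{k<i}\sqrt{1-\lambda_k^2}\le1$, and it is strictly negative for $i\ge2$ under any history in which the earlier Bobs used $\lambda_k>0$. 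Thus moving off the equator merely switches on a pure penalty weighted by $\cos^2\theta$, and the longitudinal correlation $c_{i3}$ can never help one beat the benchmark. At $\theta=\frac{\pi}{2}$ only the first bracket survives and the inequality reproduces exactly the equatorial chain of Theorem~1, whose maximal length is $6$.

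The heart of the argument is then to show that this equatorial configuration is globally optimal, i.e.\ $n(\theta)\le n(\tfrac{\pi}{2})=6$, and that $n(\theta)$ decays monotonically to zero at the poles. I would establish this by the same iterative, history-dependent thresholding that produced Table~\ref{tab:t1}: for each $\theta$ one solves $g_i>f_{cl}^{\max}$ for the minimal admissible $\lambda_i$, given that Bobs $1,\dots,i-1$ already beat $f_{cl}^{\max}$, and asks whether the chain survives to step $7$. Two monotonicities drive the conclusion. On one hand $m_1(\theta)<\tfrac{1+\sqrt{1-\lambda^2}}{2}$ whenever $\theta\neq\frac{\pi}{2}$ (for $\lambda>0$), so $|c_{i1}|$ never exceeds its equatorial value and the first bracket only shrinks; on the other hand the $\cos^2\theta$ penalty only grows as $\theta$ recedes from $\frac{\pi}{2}$. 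Together these force $(\lambda_i)_{\min}(\theta)$ above its equatorial value at every step, so the chain can only terminate earlier, giving $n(\theta)\le6$. For the endpoints, Bob$^1$ already requires $\lambda_1>\tfrac{1+\cos2\theta+\sin^3\theta}{2}$, a threshold whose derivative has fixed sign on $(0,\frac{\pi}{2})$ and which rises monotonically from $\tfrac12$ at the equator to $1$ at the poles; hence at $\theta=0,\pi$ even the first Bob would need $\lambda_1>1$, and $n=0$. Continuity of $g_i-f_{cl}^{\max}$ in $\theta$ then turns these threshold crossings into the advertised staircase $6\to5\to\cdots\to0$.

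The main obstacle is exactly the coupling between $\theta$ and the admissible sharpness history: unlike in Theorem~1 the set of $\lambda$-sequences that keep the chain alive is itself $\theta$-dependent, so one cannot simply compare $f_{av}^{AB^i}(\theta)$ with $f_{av}^{AB^i}(\frac{\pi}{2})$ at a common history. I expect to resolve this by proving the per-step domination $(\lambda_i)_{\min}(\theta)\ge(\lambda_i)_{\min}(\frac{\pi}{2})$ inductively --- using that a larger required $\lambda_{i-1}$ shrinks both $m_1$ and $m_3$ (each increasing in $\sqrt{1-\lambda_{i-1}^2}$) and hence depletes the resource available to Bob$^i$ even faster --- so that the equatorial chain dominates every off-equatorial one step by step. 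The residual monotone decay of $n(\theta)$ toward the poles is then read off from the explicit $\theta$-thresholds, supplemented where necessary by the numerical cascade exactly as in the equatorial case.
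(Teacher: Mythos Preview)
Your proposal is correct and subsumes the paper's argument. Both arrive at the same fidelity expression --- your $f_{av}^{AB^i}=\tfrac12 f_{cl}^{\max}+\tfrac12 g_i$ with the stated $m_1,m_3$ unpacks exactly to Eq.~(\ref{Noneq}) --- and both ultimately settle the count via a numerical cascade of thresholds. The difference is that you add structure the paper does not: the factorisation $g_i-f_{cl}^{\max}=\sin^2\theta[\,\cdot\,]+\cos^2\theta[\,\cdot\,]$ with the second bracket manifestly nonpositive, and a sketched inductive route to the per-step domination $(\lambda_i)_{\min}(\theta)\ge(\lambda_i)_{\min}(\tfrac{\pi}{2})$. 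The paper attempts neither; it simply plots $(\lambda_i)_{\min}$ against $\theta$ (Fig.~\ref{rsps}), \emph{observes} that the minima sit at $\theta=\tfrac{\pi}{2}$ and that the curves exceed $1$ one by one toward the poles, and tabulates the resulting staircase (Table~\ref{tab:n2}), with the pole case handled separately as a corollary. One caution on your analytical add-on: the phrase ``the first bracket only shrinks'' is loose, since the $-\tfrac{\sin\theta}{4}$ term also weakens away from the equator, so the threshold monotonicity does not follow from $|c_{i1}|(\theta)\le|c_{i1}|(\tfrac{\pi}{2})$ alone --- this is precisely the history-coupling obstacle you flag at the end, and the paper sidesteps it entirely by going numerical.
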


\begin{proof}
For a randomly chosen $\theta$, the average RSP-fidelity at the $i$-th step can be determined by using Eq.(\ref{BDState}) and Eq.(\ref{BDCond}) subject to unsharp measurement done by Bob$^i$ with a precondition that it is no less than $f_{cl}^{\max}$ corresponding to the up outcome at Bob$^i$'s side as Alice post-selects the state to consider for RSP corresponding to the down outcome at Bob$^i$'s side. This method is applied invariably for all the Bobs. Therefore the average RSP-fidelity by applying Eq.(\ref{psi50}) or Eq.(\ref{psi_perp50}) eventually becomes a function of all $\lambda_i$s ($i\geq 2$) and $\theta$, as well. The average fidelity of the conditional state at Alice's side for preparing remote states $\lbrace |\psi^i\rangle \rbrace_{\theta \neq \frac{\pi}{2},\phi_i}$ from a circle with fixed polar angle $\theta\neq \frac{\pi}{2}$ on the Bloch sphere, contingent upon the possible choices and outcomes of the unsharp measurement done by Bob$^i$, becomes
\begin{align}
f_{av}^{AB^i} =& \frac{1}{2\pi} \int_0^{2\pi} (p_{+}^i ~f_{cl}^{\max} + p_{-}^i ~\langle \psi^i | \rho^i_{A|E_{-}^{\lambda_i}} | \psi^i \rangle) ~d\phi_i \nonumber\\
=& \frac{f_{cl}^{\max}}{2} + \frac{1}{4} + \frac{\lambda_i}{4} \Big[ \cos^2\theta ~\prod_{k=1}^{i-1} \Big( \cos^2\theta + \sin^2\theta \sqrt{1-\lambda_k^2} \Big) \nonumber\\
&+ \frac{\sin^2\theta}{2^{i-1}} ~\prod_{k=1}^{i-1} \Big( \sin^2\theta + (\cos^2\theta +1) \sqrt{1-\lambda_k^2} \Big) \Big], ~~~(i\geq 2)
\label{Noneq}
\end{align}
where $p_{+}^i = p_{-}^i = \frac{1}{2} ~\forall \theta,\phi_i,\lambda_i ~(i\geq 1)$. 

Similar expressions for the preparation of remote states $\lbrace |\psi^i_{\bot}\rangle \rbrace_{\theta \neq \frac{\pi}{2},\phi_i}$ can also be found with $f_{cl}^{\max}$ as the optimal fidelity corresponding to the down outcome at Bob$^i$'s side, i.e., $f_{av}^{AB^i} = \frac{1}{2\pi} \int_0^{2\pi} (p_{+}^i ~\langle \psi^i_{\bot} | \rho^i_{A|E_{+}^{\lambda_i}} | \psi^i_{\bot} \rangle + p_{-}^i ~f_{cl}^{\max}) ~d\phi_i$. In this case, Alice rejects the state corresponding to the down outcome communicated by Bob$^i$ corresponding to the $i$-th iteration. For example, the fidelity between Alice and Bob$^2$ in the 2nd iteration turns out as, $f_{av}^{AB^2} = \frac{f_{cl}^{\max}}{2} + \frac{1}{4} + \frac{\lambda_2}{64} [(9+7\sqrt{1-\lambda_1^2})+(1-\sqrt{1-\lambda_1^2}) (4 \cos 2\theta + 3 \cos 4\theta)]$. Now $f_{av}^{AB^2} > f_{cl}^{\max}$ is achieved when $\lambda_2 > \frac{16 + 12 \sin\theta + 16 \sin 2\theta - 4 \sin 3\theta}{18 + 8 \cos 2\theta + 6 \cos 4\theta + (14-8\cos 2\theta - 6 \cos 4\theta) s(\theta)}$ where $s(\theta)=\sqrt{1-\frac{(1+\cos 2\theta + \sin^3 \theta)^2}{4}}$ under the constraint $\lambda_1 > \frac{1+\cos 2\theta + \sin^3 \theta}{2}$ imposed by Bob$^1$. Similarly, we can find the minimum sharpness parameters for subsequent Bobs as functions of $\theta$.

\begin{figure}[!ht]
\centering
\includegraphics[width=0.75\linewidth]{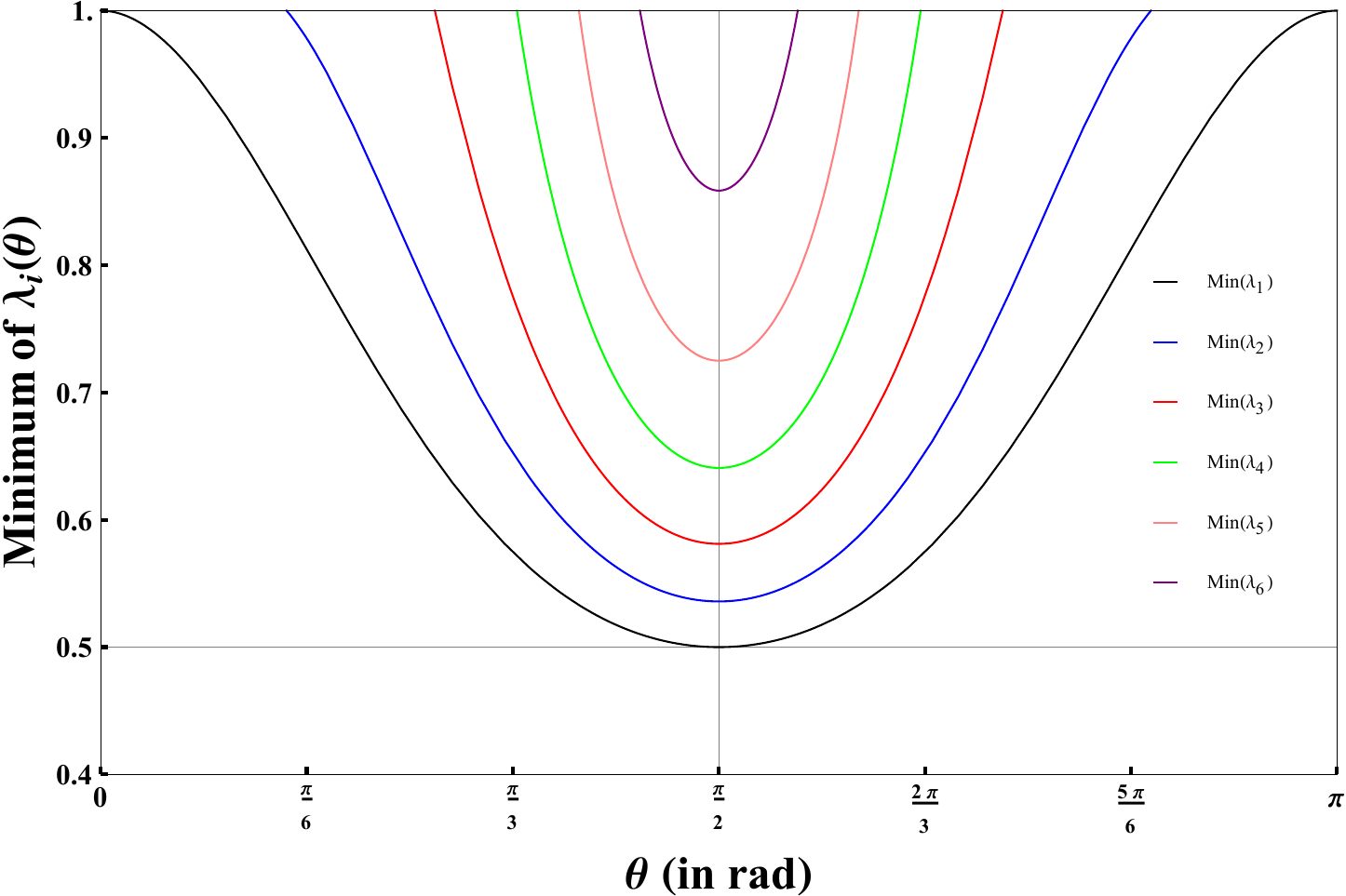}
\caption{\footnotesize (Color Online) $\lbrace(\lambda_i)_{\min}\rbrace_1^6$ are plotted against different polar angles of the remote state ($\theta$) such that, the average RSP-fidelities given by Eq.(\ref{Noneq}) can attain non-classical values well above the upper bound predicted by our classical strategy for $(\lambda_i)_{\min} < \lambda_i \leq 1 ~\forall i$, maintained with the corresponding ranges of sharpness parameters manifested by the previous Bobs (up to Bob$^{i-1}$).}
\label{rsps}
\end{figure}

We plot minimum $\lambda_i$s as  function of $\theta$ in subsequent iterations (see Fig.\ref{rsps}) to overcome the classical limit of fidelity. We observe that there can not be more than 6 number of Bobs who can sequentially prepare remote states at Alice's side from a circle with polar angle $\theta$ ($\forall \theta \in[0,\pi]$) while comparing the average RSP-fidelities with that predicted by our classical strategy. It can be seen from Fig.\ref{rsps} that, $(\lambda_i)_{\min}(\theta) \forall i$ is an even function w.r.t. $\theta=\frac{\pi}{2}$. It can also be observed that, $(\lambda_i)_{\min}(\theta)$ has minima at $\theta=\frac{\pi}{2} ~\forall i \leq 6$ in order to attain $f_{av}^{AB^i}> f_{cl}^{\max}$. Thus the highest number of Bobs ($n=6$) can be employed for sharing the task of RSP within the permissible region of $\lambda_i \in((\lambda_i)_{\min}, 1] ~\forall i \leq n$ when the remote states are prepared from the equatorial circle of the Bloch sphere ($\theta=\frac{\pi}{2}$). The permissible regions of $\lambda_i$ is displayed earlier in Table \ref{tab:t1}. 

For non-equatorial circles in the neighbourhood of $\theta=\frac{\pi}{2}$, the maximum number of Bobs remains 6, whereas it gradually becomes less than 6 as the chosen circle on the Bloch sphere progresses towards the poles of the Bloch sphere with polar angle $\theta=0$ or $\pi$ where the RSP-protocol demonstrates quantum mechanical advantage against our classical strategy in all the cases. It can be checked that $(\lambda_1)_{\min}=1$ occurs at the poles, and hence, there is no quantum advantage of RSP  even by a single Bob.  The maximum number of Bobs sharing the task of RSP gradually lowers towards the poles of the Bloch sphere because, as the perimeter of the chosen circle on the Bloch sphere shrinks, the possible number of input states becomes lower which makes the probability of guessing the remote state higher for Alice. It is evident from Fig.\ref{fidelity} that the classical fidelity limit increases to 1 as the chosen circle moves from the equatorial plane towards the poles of the Bloch sphere. Therefore it becomes more and more difficult for the subsequent Bobs to gain quantum advantage by violating the classical fidelity limit within $(\lambda_i)_{\min} < \lambda_i \leq 1$. Table\ref{tab:n2} shows the maximum number of Bobs, $n$ ($i \leq n$) sharing the task of RSP successfully against our classical strategy for a given $\theta$ such that $\lambda_{n+1}$ for Bob$^{n+1}$ exceeds the allowed range $\lambda_{n+1} < 1$. Here we apply the technique to calculate the average RSP-fidelity irrespective of all the Bobs such that a Bob can not randomly select any of the classical, quantum or mixed strategies. The task of sharing RSP discontinues when the last Bob for a given iteration becomes unable to attain the non-classical average RSP-fidelity.

\begin{table}[h!]
\centering
\begin{adjustbox}{width=0.4\textwidth}
 \begin{tabular}{| c | c |} 
 \hline
 $n$ & Range of $\theta$ \\ 
  & (in rad) \\
 \hline
 1 & $(0,0.472]\cup[2.669,\pi)$ \\ 
 \hline
 2 & $(0.472,0.849]\cup[2.292,2.669)$ \\
 \hline
 3 & $(0.849,1.058]\cup[2.084,2.292)$ \\
 \hline
 4 & $(1.058,1.215]\cup[1.926,2.084)$ \\
 \hline
 5 & $(1.215,1.370]\cup[1.771,1.926)$ \\  
 \hline
 6 & $(1.370,1.771)$ \\
 \hline
\end{tabular}
\end{adjustbox}
\caption{At most $n$-number of Bobs can sequentially prepare the remote states with polar angles $\theta$ such that every Bob attains the average RSP-fidelity$> f_{cl}^{\max}$ where the success of RSP-protocol by utilizing Eq.(\ref{Noneq}) compared to our classical strategy is implied by the columns 2.}
\label{tab:n2}
\end{table}

\begin{corollary}
No Bob is able to achieve quantum advantage when the remote state at Alice's side is picked from one of the poles of the Bloch sphere with the polar angles given by $\theta=\lbrace 0,\pi \rbrace$. 
\end{corollary}

\begin{proof}
The conditional state, produced at Alice's side depending upon the outcomes $\lbrace \pm \rbrace$ of the unsharp measurement performed by Bob$^i$, is given by, respectively,
\begin{align}
\rho^i_{A|E_+^{\lambda_i}} = \lambda_i |\psi^i_{\bot}\rangle\langle\psi^i_{\bot}| + \frac{1-\lambda_i}{2} \mathbb{I}_2, ~~~~~(i\geq 1) \end{align}
or,
\begin{align}
\rho^i_{A|E_-^{\lambda_i}} = \lambda_i |\psi^i\rangle\langle\psi^i| + \frac{1-\lambda_i}{2} \mathbb{I}_2, ~~~~~(i\geq 1).
\end{align}
Hence, the average RSP-fidelity corresponding to Bob$^i$ becomes,
\begin{equation}
f_{av}^{AB^i} = \frac{1+\lambda_i}{2}, ~~~~~(i\geq 1)
\label{pole}
\end{equation}
which is independent of all unsharp measurements performed by the previous Bobs. Eq.(\ref{pole}) can be easily derived from Eq.(\ref{Noneq}) by using $\theta=\lbrace 0,\pi \rbrace$ as the RSP-protocol is 100\% successful here with no need for the post-selection. The transformation $|0\rangle \rightarrow |1\rangle$ or vice-versa is allowed under the application of a NOT gate ($\sigma_x$)~\cite{Pati}. However, the operations, $\lbrace \mathbb{I}_2, \sigma_z\rbrace$ at Alice's side can only give rise to 50\% successful RSP-protocol as compared to the our classical strategy, where $f_{av}^{AB^i}=\frac{f_{cl}^{\max}}{2} + \frac{1+\lambda_i}{4}$. Now, $f_{cl}^{\max}|_{(\theta= 0,\pi)}=1$. Thus $f_{av}^{AB^i} \ngtr 1 ~\forall\lambda_i \in [0,1] (i \geq 1)$. Note that $\theta=0$ or $\pi$ corresponds to a specific remote state, $|\psi^i\rangle ~\forall (i\geq 1)$ to be either $|0\rangle$ or $|1\rangle$ (see Fig.\ref{Bloch}) and Alice can completely recognize the state by using the knowledge of $\theta$ beforehand without using any quantum resource. 
\end{proof}

\begin{figure}[!ht]
\centering
\includegraphics[width=0.75\linewidth]{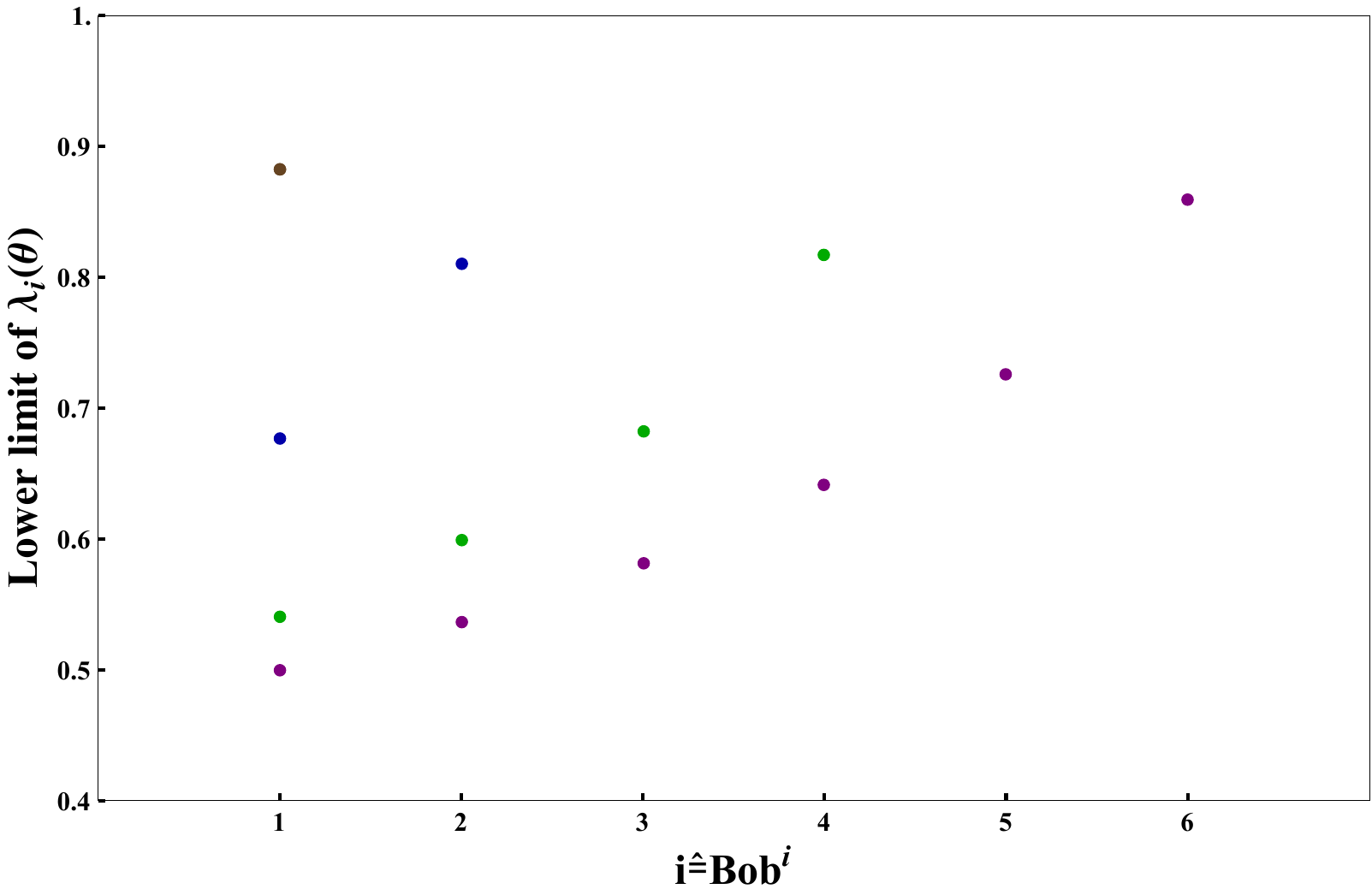}
\caption{\footnotesize (Color Online) The data points corresponding to $(\lambda_i)_{\min}(\theta)$ is plotted w.r.t. the number of Bobs ($i\leq n$) such that, for Bob$^{n+1}$ in the given scenario $(\lambda_{n+1})_{\min}>1$. Purple, green, blue, brown data points represent $\theta=\frac{\pi}{2}, \frac{3\pi}{8}, \frac{\pi}{4}$ and $\frac{\pi}{8}$ respectively. $n$ is indicated by the total number of data points for a given $\theta$.}
\label{rspsn1}
\end{figure}

We now choose a few circles of the Bloch sphere with polar angles restricted by $0<\theta < \frac{\pi}{2}$ from where the remote states are to be prepared. In Fig.\ref{rspsn1} we show that the maximum number of Bobs sharing RSP with Alice reduces with $\theta_{\neq \frac{\pi}{2}}$ compared to 6 for $\theta=\frac{\pi}{2}$ by comparing the results with our classical strategy. For instance, the bound ($n$) is 4 for $\theta=\frac{3\pi}{8}$, 2 for $\theta=\frac{\pi}{4}$ and 1 for $\theta=\frac{\pi}{8}$. We also discuss an illustration explicitly in Appendix \ref{B}.
\end{proof}

\subsection*{Sequential RSP with non-maximally  entangled state}

So far we have discussed the scenario by considering a maximally entangled initial state, which is also a maximally discordant state~\cite{Luo}. It is known that requirement of entanglement is not essential for the preparation of remote state~\cite{Discord}. Hence, it might be interesting to use a non-maximally entangled pure initial state of the form, 
\begin{equation} 
|\psi\rangle = \cos \xi |01\rangle - \sin \xi |10\rangle, ~(0\leq \xi \leq \frac{\pi}{2}).
\label{nmes} 
\end{equation}
Here $\rho^1(\xi)=|\psi\rangle\langle\psi|$. If Bob$^1$ performs unsharp measurement with sharpness parameter $\lambda_1$, the fidelity of preparing remote states from the equatorial circle of the Bloch sphere takes the form, $f_{av}^{AB^1} = \frac{1}{2} (1+\lambda_1 \sin 2\xi)$, which is maximum when $\xi=\frac{\pi}{4} ~\forall \lambda_1\in[0,1]$. Note that, when Bob$^1$ performs sharp measurement (i.e. with $\lambda_1 =1$), then $f_{av}^{AB^1} \neq 1$ except $\xi=\frac{\pi}{4}$. It implies that, non-maximally entangled state can not achieve 100\% success in the RSP-protocol due to the lack of rotational invariance in $\rho^1(\xi)$ in terms of the basis representation as given by Eq.(\ref{singlet}). However, our interest is to find the region of $\lambda_1$ for which $f_{av}^{AB^1}$ gives the non-classical advantage. And we find that, Bob$^1$ achieves non-classical RSP-fidelity by using unsharp measurement, i.e., greater than $\frac{3}{4}$ when $\lambda_1 > \frac{1}{2} \csc(2\xi)$, enabling subsequent Bobs to repeat the task of RSP.

\begin{theorem}
The maximum number of Bobs who sequentially share the task of preparing remote state at Alice's end from the equatorial circle (i.e. $\theta=\frac{\pi}{2}$) of the Bloch sphere, reduces gradually from six to zero with non-classical advantage achieved by all of them when the initially shared pure state varies from a maximally entangled state towards a pure product state.
\end{theorem}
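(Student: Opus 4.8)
The plan is to reduce the non-maximally entangled analysis to the structure already developed for the maximally entangled case, tracking how the Schmidt parameter $\xi$ merely rescales the achievable fidelity. First I would compute the correlation matrix and the local Bloch vectors of the initial state $\rho^1(\xi)=|\psi\rangle\langle\psi|$ with $|\psi\rangle=\cos\xi|01\rangle-\sin\xi|10\rangle$. Unlike the singlet, this state is no longer Bell-diagonal: Alice's subsystem carries a fixed longitudinal polarisation $\langle\sigma_z^A\rangle=\cos 2\xi$ and Bob's carries $\langle\sigma_z^B\rangle=-\cos 2\xi$, while the correlation matrix is $T=\mathrm{diag}(-\sin 2\xi,-\sin 2\xi,-1)$. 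The crucial observation is that, for remote states drawn from the equatorial circle ($\theta=\frac{\pi}{2}$), the RSP-fidelity $\langle\psi^i|\rho^i_{A|E_-^{\lambda_i}}|\psi^i\rangle$ depends only on the equatorial components of Alice's conditional Bloch vector, so the surviving longitudinal terms drop out entirely and one checks directly that the outcome probabilities remain $p^i_\pm=\frac12$.

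Next I would establish the recursion for the state coefficients under the azimuthally averaged, non-selective L\"{u}der channel of Eq.(\ref{gs}). Because the measurements lie in the equatorial plane, the averaged single-qubit channel acting on Bob's side contracts his longitudinal polarisation by $\sqrt{1-\lambda_k^2}$ and the equatorial correlations by $\frac12(1+\sqrt{1-\lambda_k^2})$, exactly as in the maximally entangled recursion, while Alice's fixed polarisation $\cos 2\xi$ is left untouched. Consequently, after $i-1$ rounds the equatorial correlation reads $T_{i,\perp}=-\frac{\sin 2\xi}{2^{i-1}}\prod_{k=1}^{i-1}(1+\sqrt{1-\lambda_k^2})$, and substituting into the fidelity expression gives
\begin{equation}
f_{av}^{AB^i}=\frac12+\frac{\lambda_i\sin 2\xi}{2^{i}}\prod_{k=1}^{i-1}\Big(1+\sqrt{1-\lambda_k^2}\Big),\qquad (i\geq 1).
\end{equation}
This is precisely Eq.(\ref{avfid}) rescaled by the single factor $\sin 2\xi$; it reproduces $f_{av}^{AB^1}=\frac12(1+\lambda_1\sin 2\xi)$ as required and recovers Theorem~1 at $\xi=\frac{\pi}{4}$, where $\sin 2\xi=1$.

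With this formula in hand, the counting proceeds exactly as in the proof of Theorem~1. Imposing $f_{av}^{AB^i}>\frac34$ defines a minimum sharpness $(\lambda_i)_{\min}(\xi)$, obtained recursively by inserting the infimum of the admissible range of each $\lambda_k$ ($k<i$) left by the success of the previous Bobs. Since the entire $\xi$-dependence enters only through the prefactor $\sin 2\xi\in(0,1]$, the fidelity is strictly increasing in $\sin 2\xi$, so each $(\lambda_i)_{\min}$ increases monotonically as $\xi$ departs from $\frac{\pi}{4}$ in either direction. I would then show that the thresholds $(\lambda_i)_{\min}(\xi)$ successively reach $1$ as $\sin 2\xi$ decreases, closing the admissible band one Bob at a time; the chain terminates entirely once even the first Bob fails, i.e. when $\lambda_1>\frac12\csc 2\xi$ has no solution in $(0,1]$, which occurs for $\sin 2\xi\le\frac12$, namely $\xi\le\frac{\pi}{12}$ or $\xi\ge\frac{5\pi}{12}$. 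Evaluating the recursion numerically pins down the intermediate windows of $\xi$ supporting $5,4,3,2,1$ Bobs.

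The main obstacle is that $(\lambda_i)_{\min}(\xi)$ is defined only implicitly through a nested infimum over the earlier sharpness parameters, so the ``one Bob at a time'' reduction cannot be read off in closed form and must instead be argued from the strict monotonicity of $f_{av}^{AB^i}$ in $\sin 2\xi$ together with its monotonicity in each $\lambda_k$; the explicit $\xi$-windows then require numerical root-finding exactly as the $\theta$-windows were tabulated in Table~\ref{tab:n2} for Theorem~2. A secondary point needing care is confirming that the sequence of averaged states remains a legitimate (positive) density operator despite the surviving longitudinal terms---but this is automatic, since each state arises from a completely positive L\"{u}der map applied to a physical state, so no positivity check beyond the coefficient bounds $0\le|c_{ij}|\le 1$ is needed.
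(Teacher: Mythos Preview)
Your proposal is correct and follows essentially the same route as the paper: both compute the correlation matrix of the $X$-state $\rho^{i}(\xi)$, obtain the identical fidelity $f_{av}^{AB^i}=\frac12+\frac{\lambda_i\sin 2\xi}{2^{i}}\prod_{k=1}^{i-1}(1+\sqrt{1-\lambda_k^2})$, and then argue the sequential closing of the admissible $\lambda_i$ windows numerically, with the $n=0$ region identified as $\sin 2\xi\le\frac12$. Your explicit remarks on why the longitudinal Bloch components are inert for equatorial targets and on the strict monotonicity in $\sin 2\xi$ are slightly more detailed than the paper's presentation, but they do not constitute a different method.
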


\begin{proof}
Corresponding to the $i-th$ Bob, the average pre-measurement state takes the form of a two-qubit $X$-state whose correlation matrix has non-zero eigenvalues $\lbrace -\prod_{k=1}^{i-1} \sqrt{1-\lambda_k^2}, -\frac{1}{2^{i-1}} \prod_{k=1}^{i-1} (1+\sqrt{1-\lambda_k^2}) \sin 2\xi, -\frac{1}{2^{i-1}} \prod_{k=1}^{i-1} (1+\sqrt{1-\lambda_k^2}) \sin 2\xi \rbrace, ~(i\geq 2)$ for all $\xi \in (0,\frac{\pi}{2})$. Therefore, the state becomes resourceful for use in the subsequent iterations. By using Eq.(\ref{psi100}) or Eq.(\ref{psi_perp100}), the average fidelity of the conditional state at Alice's side, contingent upon the unsharp measurement done by Bob$^i$, becomes
\begin{align}
f_{av}^{AB^i} =& \frac{1}{2\pi} \int_0^{2\pi} (p_{+}^i ~\langle \psi^i | \sigma_z. \rho^i_{A|E_+^{\lambda_i}}. \sigma_z | \psi^i \rangle \nonumber\\
&+ p_{-}^i ~\langle \psi^i | \mathbb{I}_2. \rho^i_{A|E_-^{\lambda_i}}. \mathbb{I}_2 | \psi^i \rangle) ~d\phi_i \nonumber\\
=& \frac{1}{2\pi} \int_0^{2\pi} (p_{+}^i ~\langle \psi^i_{\bot} | \mathbb{I}_2. \rho^i_{A|E_+^{\lambda_i}}. \mathbb{I}_2 | \psi^i_{\bot} \rangle \nonumber\\
&+ p_{-}^i ~\langle \psi^i_{\bot} | \sigma_z. \rho^i_{A|E_-^{\lambda_i}}. \sigma_z | \psi^i_{\bot} \rangle) ~d\phi_i \nonumber\\
=& \frac{1}{2} + \frac{\lambda_i}{2^i} \sin 2\xi ~\prod_{k=1}^{i-1} \Big(1+\sqrt{1-\lambda_k^2}\Big), ~~~~~(i\geq 2)
\label{nms}
\end{align}
where $p_{+}^i=p_{-}^i=\frac{1}{2}~\forall \xi,\phi_i,\lambda_i$. As a result, the RSP-fidelity is maximum for the maximally entangled initial state, i.e., $\xi=\frac{\pi}{4}$ for all the cases. The quantum supremacy of the protocol against our classical strategy can be achieved when $f_{av}^{AB^i} > \frac{3}{4}$ which is analogous to obtaining $(\lambda_i)_{\min}(\xi) < \lambda_i \leq 1$ for all Bob$^i$ ($i\geq 2$). For example, Bob$^2$ can successfully prepare a remote state from the equatorial circle of the Bloch sphere at Alice's side when $\lambda_2 > 4\sin 2\xi (1-\sqrt{1-\frac{\csc^2 2\xi}{4}})$ as per the comparison with the classical strategy.

\begin{figure}[!ht]
\centering
\includegraphics[width=0.75\linewidth]{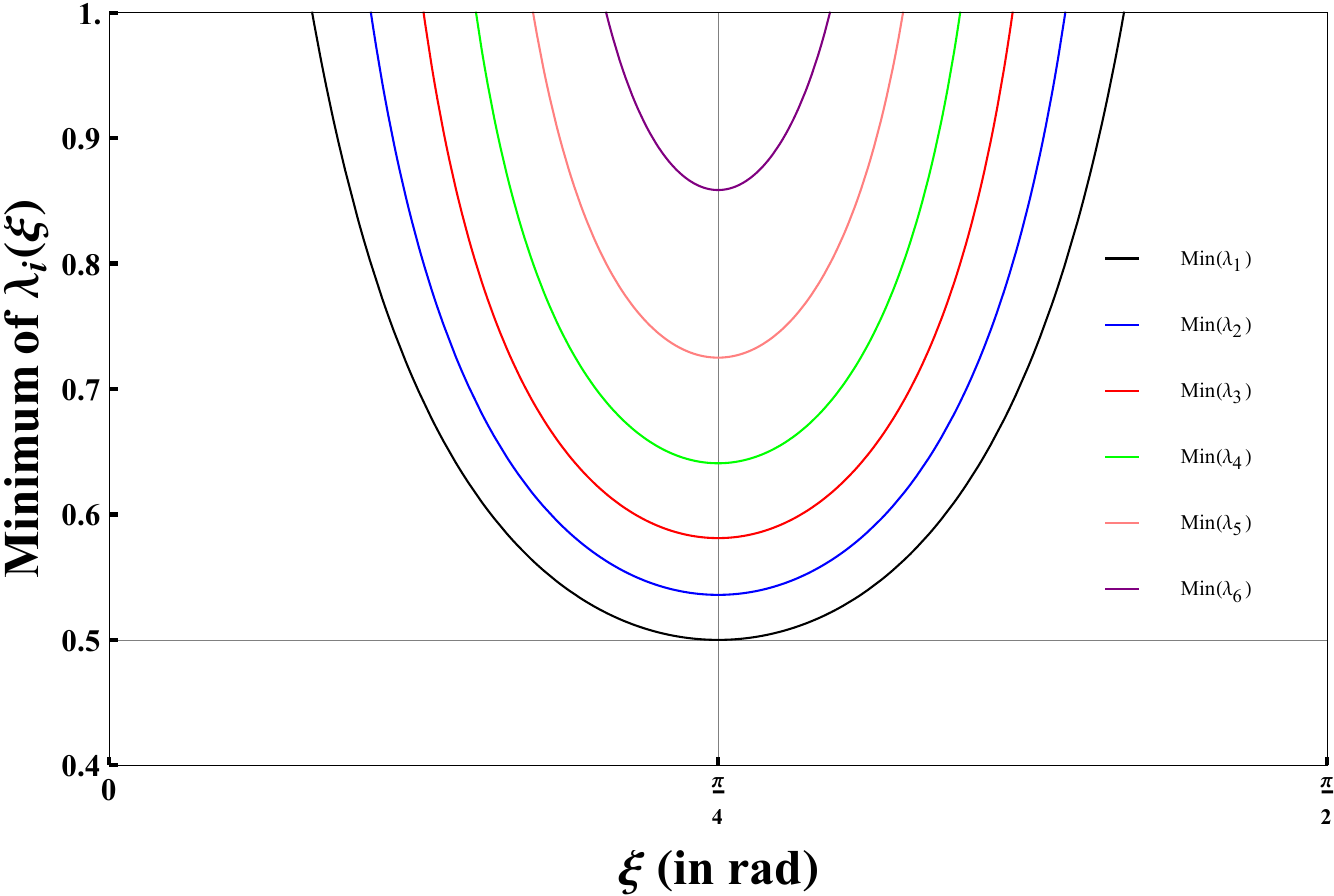}
\caption{\footnotesize (Color Online) $\lbrace(\lambda_i)_{\min}\rbrace_1^6$ are plotted against the state parameter $\xi$ of the initially shared non-maximally entangled state where remote states from equatorial circle ($\theta=\frac{\pi}{2}$) are chosen to be prepared upto Bob$^i$ such that the average RSP-fidelity expressed by Eq.(\ref{nms}) is non-classical in terms of its improvement against our classical strategy.}
\label{rsps1}
\end{figure}

We plot in Fig.\ref{rsps1} the minimum of $\lambda_i$ as a function of the parameter $\xi\in[0,\frac{\pi}{2}]$ of the initially shared non-maximally entangled state for which non-classical average fidelity for RSP is achieved corresponding to the equatorial circle (i.e. $\theta=\frac{\pi}{2}$) of the Bloch sphere. It is observed that $(\lambda_i)_{\min}(\xi)$ is an even function w.r.t. $\xi=\frac{\pi}{4}$ and has minima at $\xi=\frac{\pi}{4}$ for all  Bob$^i$ ($i\leq 6$). It  thus follows that the highest number of Bobs, i.e., $n=6$ can achieve the quantum fidelity of RSP for $\xi=\frac{\pi}{4}$, in the allowed range of $(\lambda_i)_{\min}(\xi) < \lambda_i \leq 1 ~\forall i\leq n$ (see  Table \ref{tab:t1}). As $\xi$ moves from the neighbourhood of $\xi=\frac{\pi}{4}$ towards $\xi=0$ or $\frac{\pi}{2}$ corresponding to the pure product states $\lbrace |01\rangle, |10\rangle \rbrace$, the  bound on the number of Bobs ($n$) sharing RSP with an Alice diminishes gradually from 6 to 0 as the non-classicality is revealed through the violation of our classical strategy. As the entanglement of the initial state decreases, its utility to provide quantum supremacy of the RSP-protocol in terms of $n$ also decreases. Table \ref{tab:n3} shows the bound $n$ corresponding to the region of $\xi$ of the initial state so that $\lambda_{n+1} \nleqslant 1$ occurs by means of the comparison against the classical strategy. Note that, the technique used to calculate the average RSP-fidelity remains same for all the Bobs such that no Bob can prefer a classical or mixed strategy over the quantum strategy at random and the task of sharing RSP stops to progress when the last Bob for a given iteration fails to achieve non-classical advantage through the protocol.

\begin{table}[h!]
\centering
\begin{adjustbox}{width=0.4\textwidth}
 \begin{tabular}{| c | c |} 
 \hline
 $n$ & Range of $\xi$ \\ 
  & (in rad) \\
 \hline
 0 & $[0,\frac{\pi}{12}]\cup[\frac{5\pi}{12},\frac{\pi}{2}]$ \\ 
 \hline
 1 & $(\frac{\pi}{12},0.337]\cup[1.233,\frac{5\pi}{12})$ \\
 \hline
 2 & $(0.337,0.405]\cup[1.165,1.233)$ \\
 \hline
 3 & $(0.405,0.473]\cup[1.098,1.165)$ \\
 \hline
 4 & $(0.473,0.547]\cup[1.024,1.098)$ \\  
 \hline
 5 & $(0.547,0.641]\cup[0.929,1.024)$ \\
 \hline
 6 & $(0.641,0.929)$ \\
 \hline
\end{tabular}
\end{adjustbox}
\caption{At most $n$-number of Bobs can sequentially prepare the remote states from the equatorial circle ($\theta=\frac{\pi}{2}$) of the Bloch sphere such that every Bob upto Bob$^n$ attains the average RSP-fidelity$> \frac{3}{4}$ by making use of Eq.(\ref{nms}) when the parameter $\xi$ of the initially shared non-maximally entangled state remains within the range given by column 2.}
\label{tab:n3}
\end{table}

Note that, the initially shared state between Alice and Bob$^1$ given by Eq.(\ref{nmes}) has concurrence $\mathbb{C}(\rho^1(\xi))=\sin 2\xi$. From Eq.(\ref{nms}), it is evident that as $\mathbb{C}$ increases, the average RSP-fidelity for subsequent Bobs $\forall i$ also increases linearly. The Bell states with $\xi=\frac{\pi}{4}$ achieve the maximum of the average RSP-fidelity given by Eq.(\ref{avfid}). For pure product states ($\xi=0$ or $\frac{\pi}{2}$), the average RSP-fidelity, $f_{av}^{AB^i}=\frac{1}{2} ~\forall i$. The neighbourhood of $\xi=\lbrace 0,\frac{\pi}{2} \rbrace$, i.e., $\xi \in [0,\frac{\pi}{12}] \cup [\frac{5\pi}{12},\frac{\pi}{2}]$ corresponds to $\mathbb{C}(\rho^1(\xi))\leq \frac{1}{2}$ where the initial state is not useful for the implementation of RSP at Alice's side even with a single Bob by comparing the results with our classical strategy.

\begin{figure}[!ht]
\centering
\includegraphics[width=0.75\linewidth]{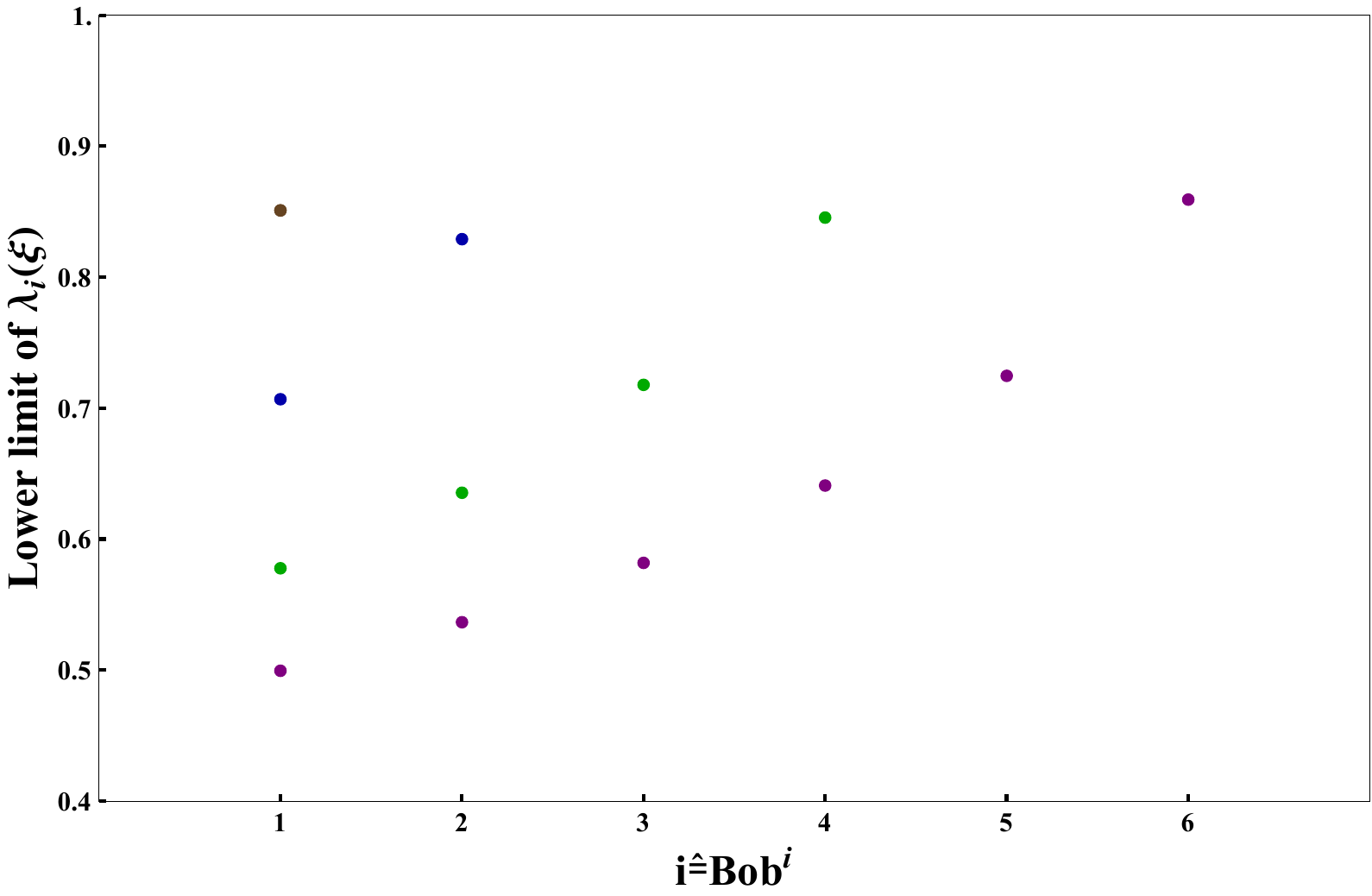}
\caption{\footnotesize (Color Online) The data points corresponding to $(\lambda_i)_{\min}(\xi)$ are plotted w.r.t. the number of Bobs ($i\leq n$) such that, for Bob$^{n+1}$ in the given scenario $(\lambda_{n+1})_{\min}>1$ depending upon the success of the RSP-protocol against our classical strategy. By fixing $\theta=\frac{\pi}{2}$, purple, green, blue and brown data points correspond to the initial states with $\xi=\frac{\pi}{4}, \frac{\pi}{6}, \frac{\pi}{8}$ and $\frac{\pi}{10}$ respectively. The total number of data points for a given $\xi$ implies the optimal bound on the number of Bobs ($n$).}
\label{rsps2}
\end{figure}

We now select a number of initial states with $\xi \in (0,\frac{\pi}{4})$ by taking into account the symmetry of Fig.\ref{rsps2} w.r.t. $\xi=\frac{\pi}{4}$ which yields similar results for the parameters $\xi$ and $\frac{\pi}{2}-\xi$. We show in Fig.\ref{rsps2} that the maximum number of successful Bobs ($n$) sharing RSP with Alice decreases with $\xi_{\neq \frac{\pi}{4}}$ compared to 6 for $\xi=\frac{\pi}{4}$. For example, $n$ is found to be reduced to 4 for $\xi=\frac{\pi}{6}$, to 2 for $\xi=\frac{\pi}{8}$ and to 1 for $\xi=\frac{\pi}{10}$.

\end{proof}

As seen above, pertaining to the equatorial circle, $\theta=\frac{\pi}{2}$, the bound on the number of observers is not improved for non-maximally entangled pure states compared to maximally entangled initial states. Similar results, with suitable post-selection on the outcomes of the subsequent Bobs (giving rise to a RSP-protocol with a success rate less than 50\%), can be shown to hold for the choice of remote qubits from the other circles of the Bloch sphere with polar angles $\theta \neq \frac{\pi}{2}$ ($0< \theta < \pi$) where the transformation $|\psi^i_{\bot}\rangle \rightarrow |\psi^i\rangle$ is forbidden $\forall i$. Therefore, we conjecture, in general, that the bound is less for non-maximally entangled state compared to maximally entangled state for a pre-determined circle of the Bloch sphere. On the other hand, the initial singlet state can be mixed with white noise to form the Werner state given by Eq.(\ref{Werner}) with parameter $c ~(0\leq c \leq 1)$. We show in Appendix \ref{C} that as the mixedness of the initial state increases, the bound on the number of observers sharing RSP decreases than that for the singlet state. No Bob can prepare remote state at Alice's side for the maximally mixed initial state and its neighbouring states.

%%%%%%%%%%%%%%%%%%%%%%%%%%%%%%%%%%%%%%%%%%%%%%%%%%%%%%%%%%%%%%%%%%
%%%%%%%%%%%%%%%%%%%  up to here  %%%%%%%%%%%%%%%%%%%%%%%%
%%%%%%%%%%%%%%%%%%%%%%%%%%%%%%%%%%%%%%%%%%%%%%%%%%%%%%%%%%%%%%%%

\section{Conclusions} \label{R4}

The ability to send or prepare a quantum state at a remote location is an important primitive of various protocols in quantum communication and computation. In the task of remote state preparation, an agent prepares a known qubit-state at a remote location with the help of one bit of classical communication and one e-bit of shared entanglement. Various protocols of RSP are possible depending on encoding-decoding strategies or choice of ensemble of states to be prepared. In the present work we consider a specific RSP task in the context of sequential network scenario. Utilizing a single copy of a quantum entangled state  for sharing various kinds of quantum correlations has been previously shown to be effective in different contexts \cite{Silva,Mal,Das,Sasmal,Gupta,Datta}. Applications of the sequential scenario  exist in several directions such as the certification of unbounded randomness~\cite{RNG}. Here we show that multiple observers (Bobs) on one side, who act sequentially and independently of each other, can individually prepare certain states taken from a specified ensemble, at a remote location (Alice's laboratory) with non-classical fidelity. 
 
In order to obtain genuine quantum advantage of a RSP protocol, we first figure out the classical limit of fidelity. It is found that classical fidelity can range from $\frac{3}{4}$ to as high as 1 depending upon the population of states on the chosen circle of the Bloch sphere from where a state is intended to be prepared remotely. For RSP with equatorial input ensemble of states, we found at most six observers can be successful, where all of them act independently. From our results, it follows that for every Bob there is a range of the sharpness parameter for which non-classical advantage persists, and even after the last successful Bob some residual correlation (e.g. non-zero discord) remains which keeps the post-measured state resourceful.

We further propose a new RSP protocol with input states taken from non-equatorial circles and obtain the pertinent classical fidelity. In this case, the number of successful parties reduces from 6 to zero as the chosen circle varies from the neighbourhood of the equatorial circle to the poles of the Bloch sphere. It may be noted that the singlet state is not unique, and the scheme proposed here works equally well for any of the maximally entangled Bell states which are connected by local unitary operations. 
  
Next we consider the performance of non-maximally entangled states in the above mentioned context. Here the bound on the maximum number of observers gets reduced as expected. Even further reduction is shown to ensue when the initial states are mixed with the white noise. As the visibility of the initially shared Werner state reduces, the bound on the number of senders also reduces from 6 to zero gradually, because RSP-fidelity and the mixedness follow an inverse relation with each other~\cite{Discord}. 
 
The framework of remote preparation of a qubit by multiple observers may be extended in various directions in future works. The performance of the protocol under various classical strategies~\cite{Horodecki14}, and utilizing resources such as shared randomness~\cite{BBCM95,Brassard99,Bowles15,Guha21} and weak measurements would be interesting to explore. Further, investigations on collective remote state preparation~\cite{Collective} using multipartite and higher dimensional states are worth of investigation. Besides, consideration of various completely positive trace preserving(CPTP) maps~\cite{DC,Yang} may represent the scenario of RSP more practically. Finally, in future works it will be interesting to explore the quantitative trade-off between the amount of resource accessed and the population utilizing the resource in such multiple observer scenarios. Our present analysis should inspire studies of other RSP protocols, such as RSP with different encoding-decoding operations in the context of the sequential scenario.

%%%%%%%%%%%%%%%%%%%%%%%%%%%%%%%%%%%%%%%%%%%%%%%%%%%%%%%%%%%%%%%%%%%%%%%
%%%%%%%%%%%%%%%%%%%%%%%%%%%%%%%%%%%%%%%%%%%%%%%%%%%%%%

%%%%%%%%%%%%%%%%%%%%%%%%%%%%%%%%%%%%%%%%%%%%%%%%%%%%%%%%%%%%%
%%%%%%%%%%%%%%%%%%%%%%%%%%%%%%%%%%%%%%%%%%%%%%%%%%%

%{\it Acknowledgements:} The work of SD is financially supported by INSPIRE Fellowship from Department of Science and Technology, Govt. of India (Grant No.C/5576/IFD/2015-16). SM acknowledges support from the Ministry of Science and technology, Taiwan (Grant No. MOST 110- 2124-M-002-012). ASM acknowledges support from the DST Project no. DST/ICPS/QuEST/2018/Q-79. The authors thank the anonymous referees for their useful comments that helped to improve this paper significantly.

\backmatter

\bmhead{Acknowledgments}

The work of SD is financially supported by INSPIRE Fellowship from Department of Science and Technology, Govt. of India (Grant No.C/5576/IFD/2015-16). SM acknowledges support from the Ministry of Science and technology, Taiwan (Grant No. MOST 110- 2124-M-002-012). ASM acknowledges support from the DST Project no. DST/ICPS/QuEST/2018/Q-79. The authors thank the anonymous referees for their useful comments that helped to improve this paper significantly.

\section*{Declarations}

\bmhead{Conflict of interest} The authors have no relevant financial or non-financial interests to disclose.

\bmhead{OpenAccess} This article is licensed under a Creative Commons Attribution 4.0 International License, which permits use, sharing, adaptation, distribution and reproduction in any medium or format, as long as you give appropriate credit to the original author(s) and the source, provide a link to the Creative Commons licence, and indicate if changes were made. The images or other third party material in this article are included in the article's Creative Commons licence, unless indicated otherwise in a credit line to the material. If material is not included in the article's Creative Commons licence and your intended use is not permitted by statutory regulation or exceeds the permitted use, you will need to obtain permission directly from the copyright holder. To view a copy of this licence, visit http://creativecommons.org/licenses/by/4.0/.

% common bib file
%% if required, the content of .bbl file can be included here once bbl is generated
%%\input sn-article.bbl
\bibliography{sn-bibliography}

\appendix

\begin{appendices}

\section{Average fidelity for preparing remote states from equatorial great circle using singlet state initially}\label{A}

When Bob$^i$ obtains (+) outcome, then the conditional state given by Eq.(\ref{cond1}) is prepared at Alice's side and Alice applies $\sigma_z$ rotation to attain the desired state $|\psi^i\rangle = \frac{1}{\sqrt{2}} \big(|0\rangle + \exp(i\phi_i) |1\rangle\big)$. Hence after rotation, the state becomes $\sigma_z. \rho^i_{A|E_+^{\lambda_i}}. \sigma_z$. So the average RSP-fidelity considering the prepared state and all the desired states from the equatorial great circle of the Bloch sphere becomes,
\begin{equation}
f_{av^+}^{AB^i} = \frac{1}{2\pi} \int_0^{2\pi} \langle \psi^i | \sigma_z. \rho^i_{A|E_+^{\lambda_i}}. \sigma_z | \psi^i \rangle ~d\phi_i, ~~~~~(i\geq 2)
\label{A1}
\end{equation}
with the use of Eq.(\ref{afid}). Now by applying $\sigma_z.|\psi^i_{\bot}\rangle\langle\psi^i_{\bot}|.\sigma_z = |\psi^i\rangle\langle\psi^i|$ for $\theta=\frac{\pi}{2}$ on Eq.(\ref{cond1}), Eq.(\ref{A1}) becomes,
\begin{align}
f_{av^{+}}^{AB^i} =& \Big[ \frac{\lambda_i}{2^{i-1}} ~\prod_{k=1}^{i-1} \Big(1+\sqrt{1-\lambda_k^2}\Big)\Big] ~\frac{1}{2\pi} \int_0^{2\pi} \langle \psi^i | \psi^i \rangle \langle \psi^i | \psi^i \rangle ~d\phi_i \nonumber\\
& + \frac{1}{2} \Big[1 - \frac{\lambda_i}{2^{i-1}} ~\prod_{k=1}^{i-1} \Big(1+\sqrt{1-\lambda_k^2}\Big) \Big] ~\frac{1}{2\pi} \int_0^{2\pi} \langle \psi^i | \sigma_z^2 | \psi^i \rangle ~d\phi_i \nonumber\\
=& \frac{1}{2} + \frac{\lambda_i}{2^i} \prod_{k=1}^{i-1} \Big(1+\sqrt{1-\lambda_k^2}\Big), ~~~~~(i\geq 2).
\end{align}

In a similar fashion, when Bob$^i$ obtains (-) outcome, then the conditional state given by Eq.(\ref{cond2}) is prepared at Alice's side and Alice applies $\mathbb{I}_2$ to obtain the desired state $|\psi^i\rangle$. Therefore with the help of Eq.(\ref{afid}), the average RSP-fidelity becomes,
\begin{align}
f_{av^{-}}^{AB^i} =& \frac{1}{2\pi} \int_0^{2\pi} \langle \psi^i | \mathbb{I}_2. \rho^i_{A|E_-^{\lambda_i}}. \mathbb{I}_2 | \psi^i \rangle ~d\phi_i,  \nonumber\\
=& \Big[ \frac{\lambda_i}{2^{i-1}} ~\prod_{k=1}^{i-1} \Big(1+\sqrt{1-\lambda_k^2}\Big)\Big] ~\frac{1}{2\pi} \int_0^{2\pi} \langle \psi^i | \psi^i \rangle \langle \psi^i | \psi^i \rangle ~d\phi_i \nonumber\\
& + \frac{1}{2} \Big[1 - \frac{\lambda_i}{2^{i-1}} ~\prod_{k=1}^{i-1} \Big(1+\sqrt{1-\lambda_k^2}\Big) \Big] ~\frac{1}{2\pi} \int_0^{2\pi} \langle \psi^i | \psi^i \rangle ~d\phi_i \nonumber\\
=& \frac{1}{2} + \frac{\lambda_i}{2^i} \prod_{k=1}^{i-1} \Big(1+\sqrt{1-\lambda_k^2}\Big), ~~~~~(i\geq 2).
\end{align}

On the other hand, if the desired state at Alice's side is $|\psi^i_{\bot}\rangle = \frac{1}{\sqrt{2}} \big(|0\rangle - \exp(i\phi_i) |1\rangle\big)$, then depending upon Alice's unitaries $\lbrace \mathbb{I}_2, \sigma_z \rbrace$ corresponding to Bob$^i$'s outcomes $\lbrace +,- \rbrace$, the average RSP-fidelity turns out as,
\begin{align}
f_{av^{\pm}}^{AB^i} =& \frac{1}{2\pi} \int_0^{2\pi} \langle \psi^i_{\bot} | \mathbb{I}_2. \rho^i_{A|E_+^{\lambda_i}}. \mathbb{I}_2 | \psi^i_{\bot} \rangle ~d\phi_i,  \nonumber\\
=& \frac{1}{2\pi} \int_0^{2\pi} \langle \psi^i_{\bot} | \sigma_z. \rho^i_{A|E_-^{\lambda_i}}. \sigma_z | \psi^i_{\bot} \rangle ~d\phi_i, \nonumber\\
=& \frac{1}{2} + \frac{\lambda_i}{2^i} \prod_{k=1}^{i-1} \Big(1+\sqrt{1-\lambda_k^2}\Big), ~~~~~(i\geq 2).
\end{align}

Thus according to Eq.(\ref{psi100}) and Eq.(\ref{psi_perp100}), the average RSP-fidelity considered upto Bob$^i$, by taking all the possible measurement outcomes in different contexts into account, can be represented by
\begin{align}
f_{av}^{AB^i} =& \frac{1}{2\pi} \int_0^{2\pi} (p_{+}^i ~\langle \psi^i | \sigma_z. \rho^i_{A|E_+^{\lambda_i}}. \sigma_z | \psi^i \rangle \nonumber\\
&+ p_{-}^i ~\langle \psi^i | \mathbb{I}_2. \rho^i_{A|E_-^{\lambda_i}}. \mathbb{I}_2 | \psi^i \rangle) ~d\phi_i \nonumber\\
=& \frac{1}{2\pi} \int_0^{2\pi} (p_{+}^i ~\langle \psi^i_{\bot} | \mathbb{I}_2. \rho^i_{A|E_+^{\lambda_i}}. \mathbb{I}_2 | \psi^i_{\bot} \rangle \nonumber\\
&+ p_{-}^i ~\langle \psi^i_{\bot} | \sigma_z. \rho^i_{A|E_-^{\lambda_i}}. \sigma_z | \psi^i_{\bot} \rangle) ~d\phi_i \nonumber\\
=& \frac{1}{2} + \frac{\lambda_i}{2^i} \prod_{k=1}^{i-1} \Big(1+\sqrt{1-\lambda_k^2}\Big), ~~~~~(i\geq 2)
\label{A5}
\end{align} 
where the probabilities of finding the conditional states $\rho^i_{A|E_{\pm}^{\lambda_i}}$ (i.e. $p_{\pm}^i$) satisfy $p_{+}^i = p_{-}^i = \frac{1}{2} ~\forall \phi_i,\lambda_i ~(i\geq 1)$. 

\section{Sequential RSP for equatorial states with a different classical strategy}\label{AA}

When the classical strategy is defined under the special implementation of the classical channel where Bob hints Alice about the azimuthal angle of the Bloch sphere without disturbing his own system, then the optimal classical bound under such circumstances comes out to be $\frac{1}{2}+\frac{1}{\pi} \simeq 0.818$ as shown in~\cite{Horodecki14}. By repeating our analysis for the quantum strategy described in the main text,  we find the minimum sharpness of measurements required for quantum advantage for all subsequent Bobs for the performance of RSP in the multiple observer scenario as follows.

\begin{table}[h!]
\centering
\begin{adjustbox}{width=0.4\textwidth}
 \begin{tabular}{| c | c |} 
 \hline
 $i ~\hat{=}$ Bob$^i$ & Range of $\lambda_i$ \\
 &  $\hat{=}$ 2nd classical strategy \\ 
 \hline
 1 & (0.637 - 1] \\ 
 \hline
 2 & (0.719 - 1] \\
 \hline
 3 & (0.848 - 1]\\
 \hline
\end{tabular}
\end{adjustbox}
\caption{The domain of sharpness parameters for which 3 Bobs altogether attain the average RSP-fidelity given by Eq.(\ref{avfid}) with a gain compared to the classical bound of $\frac{1}{2}+\frac{1}{\pi}$.}
\label{tab:new}
\end{table}

Here we observe that 3 Bobs consecutively achieve success in the 100\% successful RSP-protocol while Bob$^4$ achieves a maximum of 0.79 subject to the sharp measurement done by him. However, it is not obvious how to  generalise the above classical strategy for states corresponding to the non-equatorial circles of the Bloch sphere.

\section{Sequential RSP with $\theta=\tan^{-1} \sqrt{2}, \xi=\frac{\pi}{4}$}\label{B}

\begin{result}
There can be at most 3 Bobs who sequentially prepare the remote states in Alice's lab with success achieved by all of them against our optimal classical strategy by choosing from a non-equatorial circle with polar angle ($\theta$) either $\tan^{-1} \sqrt{2}$ or $(\pi - \tan^{-1} \sqrt{2})$ (in radian) on the Bloch sphere when Alice and Bob$^1$ share a maximally entangled state ($\xi=\frac{\pi}{4}$) initially.
\end{result}

\begin{proof}
The average fidelity of preparing remote state at Alice's side by Bob$^i$ under suitable post-selection method at the end of the i-th step of performing the task of RSP with 50\% success as $\theta\in (0,\pi)$, is given by,
\begin{equation}
f_{av}^{AB^i} = \frac{f_{cl}^{\max}|_{(\theta=\tan^{-1}\sqrt{2})}}{2} + \frac{1}{4} \Big[1 + \frac{\lambda_i}{3^{i-1}} \prod_{k=1}^{i-1} \Big(1+2\sqrt{1-\lambda_k^2}\Big)\Big], ~~(i\geq 2)
\end{equation}
with the use of Eq.(\ref{psi50}), where $f_{cl}^{\max}|_{(\theta=\tan^{-1}\sqrt{2})}=f_{cl}^{\max}|_{(\theta=\pi-\tan^{-1}\sqrt{2})}=0.803$.

There exists a finite range of $\lambda_i$ in each step upto Bob$^i$ ($i \geq 2$), for which non-classical advantage through RSP-fidelity can be gained i.e. $f_{av}^{AB^i}> 0.803$ in comparison with our classical strategy. It is in accordance with the possible range of sharpness parameters $\lambda_j$, ($j<i$) employed by  all the previous Bobs upto Bob$^j$ $\forall j \in \lbrace 1,2,...,i-1 \rbrace$ to achieve RSP at Alice's side with quantum average fidelity. This is demonstrated in Table \ref{tab:t2}.

\begin{table}[h!]
\centering
\begin{adjustbox}{width=0.3\textwidth}
 \begin{tabular}{| c | c |} 
 \hline
 $i ~\hat{=}$ Bob$^i$ & Range of $\lambda_i$ \\ 
 \hline
 1 & (0.605 - 1] \\ 
 \hline
 2 & (0.701 - 1] \\
 \hline
 3 & (0.866 - 1] \\
 \hline
\end{tabular}
\end{adjustbox}
\caption{The range of sharpness parameters $\lambda_i\in((\lambda_i)_{\min},1]$ for which 3 Bobs can independently share the task of RSP at Alice's side with average fidelity$f_{av}^{AB^i}> 0.803$ as compared to the classical strategy.}
\label{tab:t2}
\end{table}

Corresponding to the 4-th Bob, there exists no region of $\lambda_{4} \in [0,1]$ for which Bob$^{4}$ can be successful in preparing remote state at Alice's side whenever all the Bobs choose a circle with polar angle either $\tan^{-1} \sqrt{2}$ or $(\pi - \tan^{-1} \sqrt{2})$ from the Bloch sphere. Bob$^{4}$ achieves optimum average RSP-fidelity, i.e. 0.733 even by doing a projective measurement. Therefore, at most 3 independent Bobs are able to exhibit the half successful task of RSP in the given scenario where all the Bobs apply similar kind of approach sequentially to achieve success against the optimal classical strategy.
\end{proof}

\section{Sequential RSP with Werner state}\label{C}

\begin{result}
The maximum number of Bobs preparing remote states sequentially at Alice's side with success accomplished by all of them uniformly compared to our optimal classical strategy is upper bounded by 6 when Alice shares the Werner state with Bob$^1$ initially. As the mixedness of the initial state increases, the maximum number of Bobs reduces to zero gradually.
\end{result}

\begin{proof}
We suppose that, $\rho^1 = \rho_W$ given by Eq.(\ref{Werner}) which has mixedness quantified by the linear entropy~\cite{Peters} as $S_L(\rho_W)= \frac{4}{3}(1-\operatorname{Tr}[\rho_W^2]) = 1-c^2$ where the Werner parameter $c\in [0,1]$. There are a number of Bobs at one half of the Werner state who wants to sequentially prepare remote states from the equatorial circle of the Bloch sphere ($\theta=\frac{\pi}{2}$) to Alice present at the other half of the shared state. For the RSP protocol in this scenario, the average RSP-fidelity corresponding to Bob$^1$ by applying Eq.(\ref{psi100}) or Eq.(\ref{psi_perp100}) becomes 
\begin{equation}
f_{av}^{AB^1} = \frac{1+c \lambda_1}{2} = \frac{1 + \sqrt{1-S_L(\rho_W)} \lambda_1}{2}
\end{equation}
which can not produce 100\% success of the protocol (i.e. $f_{av}^{AB^1} \neq 1$) even for the sharp measurement at Bob$^1$'s side (i.e. $\lambda_1 =1$) until the initial state is the singlet state with $c=1$ given by Eq.(\ref{singlet}). Here $f_{av}^{AB^1}$ is higher than the classical fidelity bound i.e. $\frac{3}{4}$ when $\frac{1}{2c} < \lambda_1 \leq 1$. Hence $(\lambda_1)_{\min}=\frac{1}{2\sqrt{1-S_L(\rho_W)}}$ which increases with the mixedness and is the lowest when $c=1$ i.e. for the singlet state. When $\rho^1$ is the maximally mixed state ($c=0$), then $f_{av}^{AB^1} = \frac{1}{2}$ and is not capable of doing RSP with quantum advantage even by employing a single Bob. And the allowed range of $(\lambda_1)_{\min} < 1$ is analogous to $c > \frac{1}{2}$. Therefore no Bob can prepare remote states from the equatorial plane of the Bloch sphere at Alice's side when $0 \leq c \leq \frac{1}{2}$.

For subsequent Bobs ($i\geq 2$), the average RSP-fidelity as per Eq.(\ref{psi100}) or Eq.(\ref{psi_perp100}), by using $p_{+}^i=p_{-}^i = \frac{1}{2} ~\forall c,\phi_i,\lambda_i$, becomes
\begin{equation}
f_{av}^{AB^i} = \frac{1}{2} + \frac{c \lambda_i}{2^i} \prod_{k=1}^{i-1} \Big(1+\sqrt{1-\lambda_k^2}\Big), ~~~~~(i\geq 2)
\label{avfid1}
\end{equation}
which is larger than $\frac{3}{4}$ corresponding to the equatorial circle ($\theta=\frac{\pi}{2}$) of the Bloch sphere when $(\lambda_i)_{\min} < \lambda_i \leq 1$. For instance, $(\lambda_2)_{\min} = 4c-2\sqrt{4c^2 -1}$ by comparing the average RSP-fidelity with its classical counterpart. We plot in Fig.\ref{rspsn2} $(\lambda_i)_{\min} ~\forall i$ within the allowed region of $\lambda_i ~\forall i$ and show that, there can not be more than 6 sequential Bobs who can successfully share RSP with Alice in comparison with our optimal classical strategy for the initially shared Werner state in the given framework. It can be observed from Fig.\ref{rspsn2} that, $(\lambda_i)_{\min} ~\forall i\leq 6$ is inversely proportional to $c$. Thus as the mixedness of the initial state increases, $(\lambda_i)_{\min} ~\forall i$ becomes larger and it becomes difficult for more number of Bobs to gain success in the RSP-protocol and the optimal bound on the number of Bobs ($n$) decreases.

\begin{figure}[!ht]
\centering
\includegraphics[width=0.75\linewidth]{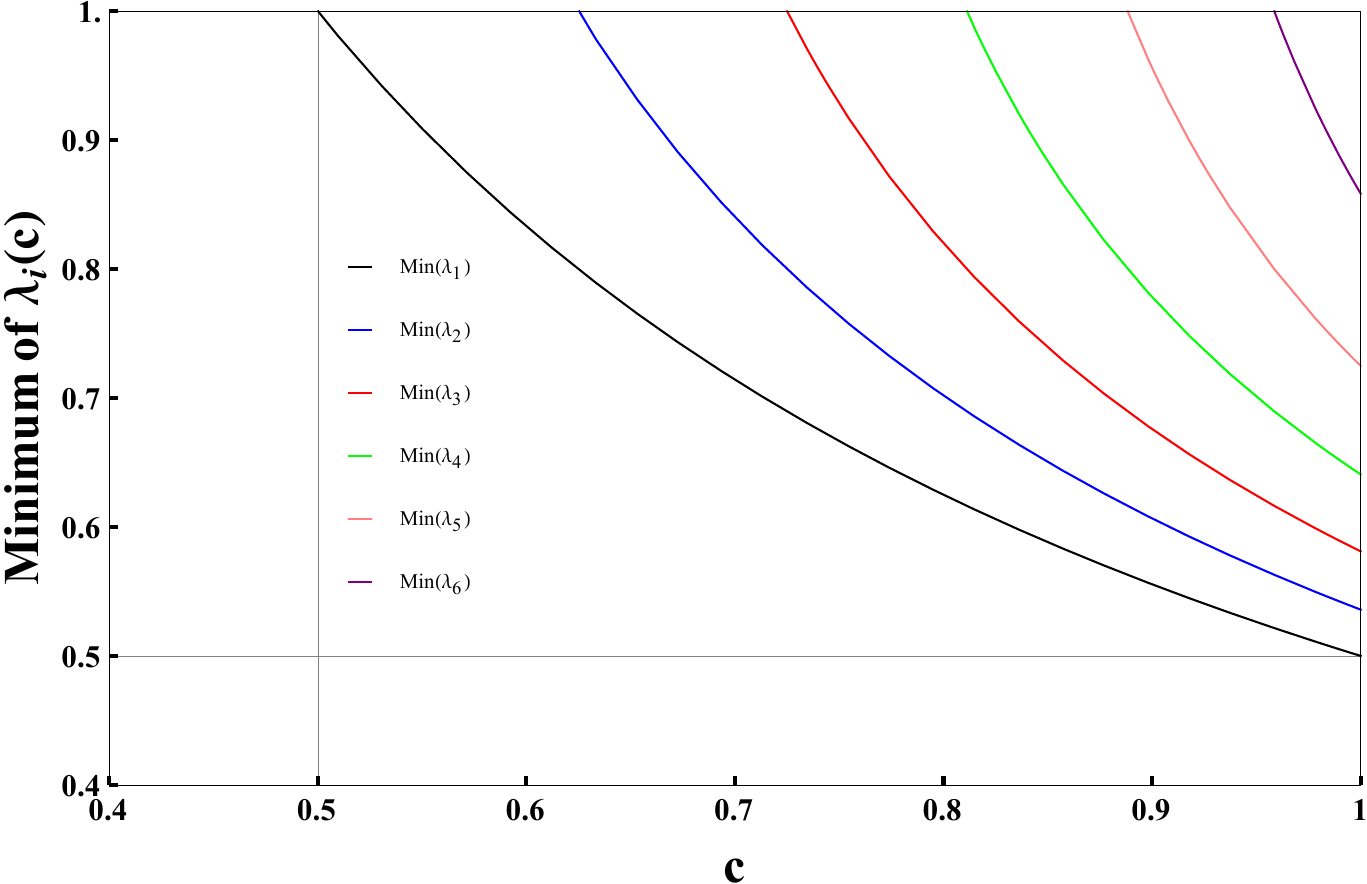}
\caption{\footnotesize (Color Online) $\lbrace(\lambda_i)_{\min}\rbrace_1^6$ are plotted against the Werner state parameter $c$ of the initial state where the remote states are prepared from the equatorial circle ($\theta=\frac{\pi}{2}$) in order to achieve non-classical average RSP-fidelity by applying Eq.(\ref{avfid1}) corresponding to Bob$^i$.}
\label{rspsn2}
\end{figure}

The bound, $n$ is maximum, i.e. 6 for the pure initial (singlet) state with $c=1$, whereas it gradually decreases depending upon the mixedness of the initial state and becomes zero for $c\leq \frac{1}{2}$ or the linear entropy $S_L(\rho_W) \geq \frac{3}{4}$. In Table \ref{tab:n4}, we demonstrate the bound on the successful Bobs ($n$) corresponding to the regions of Werner parameters of $\rho_W$ for which all the $n$-number of Bobs achieve non-classical advantage. We assume that all the Bobs follow the same strategy to compute the average RSP-fidelity while the task of sharing RSP truncates subject to the failure of the last Bob for a given iteration to attain the non-classical supremacy.

\begin{table}[h!]
\centering
\begin{adjustbox}{width=0.25\textwidth}
 \begin{tabular}{| c | c |} 
 \hline
 $n$ & Range of $c$ \\ 
 \hline
 0 & $[0,\frac{1}{2}]$ \\ 
 \hline
 1 & $(\frac{1}{2},0.625]$ \\
 \hline
 2 & $(0.625,0.725]$ \\
 \hline
 3 & $(0.725,0.811]$ \\
 \hline
 4 & $(0.811,0.888]$ \\  
 \hline
 5 & $(0.888,0.959]$ \\
 \hline
 6 & $(0.959,1]$ \\
 \hline
\end{tabular}
\end{adjustbox}
\caption{At most $n$-number of Bobs can sequentially prepare the remote states from equatorial circle ($\theta=\frac{\pi}{2}$) of the Bloch sphere such that every Bob upto Bob$^n$ are capable for doing RSP with average RSP-fidelity$> \frac{3}{4}$ by employing Eq.(\ref{avfid1}) when the visibility of the Werner state lies within the region given by column 2.}
\label{tab:n4}
\end{table}

If the remote states are chosen from the non-equatorial circles ($\theta \neq \frac{\pi}{2}$), then corresponding to Bob$^1$ the average RSP-fidelity under suitable post-selection technique by employing Eq.(\ref{psi50}) or Eq.(\ref{psi_perp50}) becomes
\begin{equation}
f_{av}^{AB^1} = \frac{f_{cl}^{\max}}{2} + \frac{1+c \lambda_1}{4}
\end{equation}
and $f_{av}^{AB^1} > f_{cl}^{\max}$ occurs when $\lambda_1 > \frac{1+\cos 2\theta + \sin^3 \theta}{2c}$ which increases with the mixedness and becomes the smallest for the singlet state. No Bob is able to successfully prepare remote states from the non-equatorial circles of the Bloch sphere at Alice's side when the initial state is mixed maximally. On the other hand, corresponding to subsequent Bobs ($i\geq 2$), the average fidelity of preparing remote states from the non-equatorial circles of the Bloch sphere ($\theta \neq \frac{\pi}{2}$, $\theta \in (0,\pi)$) under the method of post-selection, by applying $p_{+}^i=p_{-}^i = \frac{1}{2} ~\forall c,\theta,\phi_i,\lambda_i$, becomes
\begin{align}
f_{av}^{AB^i} =& \frac{f_{cl}^{\max}}{2} + \frac{1}{4} + \frac{c \lambda_i}{4} \Big[ \cos^2\theta ~\prod_{k=1}^{i-1} \Big( \cos^2\theta + \sin^2\theta \sqrt{1-\lambda_k^2} \Big) \nonumber\\
&+ \frac{\sin^2\theta}{2^{i-1}} ~\prod_{k=1}^{i-1} \Big( \sin^2\theta + (\cos^2\theta +1) \sqrt{1-\lambda_k^2} \Big) \Big], ~~~(i\geq 2)
\end{align}
according to Eq.(\ref{psi50}) or Eq.(\ref{psi_perp50}) which decreases as the mixedness of the initial state increases, therefore it becomes higher than the classical fidelity bound for the lesser number of eligible Bobs in the sequence from the analogy of the previous discussion. The action of each sequential Bob remains uniform throughout the sharing of RSP-protocol which continues till the last Bob in a given sequence achieves the non-classical advantage by comparing with the optimal classical strategy proposed by us. Note that, RSP-protocol for the non-equatorial circles of the Bloch sphere can achieve the success of less than 50\% in case of Werner state as the initial state. As a consequence, we generally summarize that, no more than 6 Bobs can share the task of RSP with an Alice when the Werner state is shared at the beginning and the optimum number of Bobs changes inversely with the mixedness of the Werner state.
\end{proof}

\end{appendices}

\end{document}